\providecommand{\U}[1]{\protect\rule{.1in}{.1in}}
\providecommand{\U}[1]{\protect\rule{.1in}{.1in}}
\newtheorem{theorem}{Theorem}
\newtheorem{lemma}{Lemma}
\newtheorem{proposition}{Proposition}
\newtheorem{remark}{Remark}
\newtheorem{definition}{Definition}
\useunder{\uline}{\ul}{}
\newcommand{\multiline}[1]{  \begin{tabularx}{\dimexpr\linewidth-\ALG@thistlm}[t]{@{}X@{}}
#1
\end{tabularx}
}
\setlist[itemize]{leftmargin=*}
\newcommand{\tsup}[1]{\textsuperscript{#1}}
\newcommand{\T}{\top}
\newcommand{\I}{\mathbf{I}}
\newcommand{\0}{\mathbf{0}}
\newcommand{\diag}{\text{diag}}
\newcommand{\bmtx}[1]{\begin{bmatrix}#1\end{bmatrix}}
\newcommand{\R}{\mathbb{R}}
\newcommand{\N}{\mathbb{N}}
\newcommand{\X}{\mathcal{X}}
\newcommand{\mb}[1]{\mathbf{#1}}
\newcommand{\y}{\bm{y}}
\begin{document}

\title{\bf Mesh Stability Guaranteed Rigid Body Networks Using Control and Topology Co-Design}


\author{Zihao Song, Shirantha Welikala, Panos J. Antsaklis and Hai Lin\thanks{This work was supported by the National Science Foundation under  Grant CNS-1830335 and Grant IIS-2007949. Zihao Song, Panos J. Antsaklis, and Hai Lin are with the Department of Electrical Engineering, University of Notre Dame, Notre Dame, IN, USA. e-mail: \{{\tt zsong2, pantsakl, hlin1}\}{\tt @nd.edu}. Shirantha Welikala is with the Department of Electrical and Computer Engineering, Stevens Institute of Technology, Hoboken, NJ, USA. e-mail: {\tt swelikal@stevens.edu}.}}
\maketitle
\thispagestyle{empty}

\begin{abstract}
Merging and splitting are of great significance for rigid body networks 
in making such networks reconfigurable. 
The main challenges lie in simultaneously ensuring the compositionality of the distributed controllers and the mesh stability of the entire network.
To this end, we propose a decentralized control and topology co-design method for rigid body networks, which enables flexible joining and leaving of rigid bodies without the need to redesign the controllers for the entire network after such maneuvers.
We first provide a centralized linear matrix inequality (LMI)-based control and topology co-design optimization of the rigid body networks with a formal mesh stability guarantee.
Then, these centralized mesh stability constraints are made decentralized by a proposed alternative set of sufficient conditions. 
Using these decentralized mesh stability constraints and Sylvester's criterion-based decentralization techniques, the said centralized LMI problem is equivalently broken down into a set of smaller decentralized LMI problems that can be solved at each rigid body, enabling flexible merging/splitting of rigid bodies. 
Finally, the effectiveness of the proposed co-design method is illustrated based on a specifically developed simulator and a comparison study with respect to a state-of-the-art method.
\end{abstract}

\section{Introduction}\label{sec:intro}


Mechanical devices, such as cars, quadrotors, satellites, and surface/underwater vehicles, are generally modeled as rigid-body dynamics. They can formulate rigid body networks and work as a team by grouping together with sophisticated coordination.
Rigid body networks are widely applied in many scenarios, such as surveillance \cite{ullah2021dynamic}, area coverage \cite{tnunay2023distributed}, and supply transportation \cite{yao2020control,parekh2022review}. 
This type of network often follows certain geometric formations with safe separations between them. In this way, they can result in spatially and temporally collective behaviors and gain more capabilities than individuals.

Past decades have observed several mainstream approaches for rigid body networks control, involving linear (e.g., PID \cite{seshan2021geometric}, LQR \cite{madeiras2024trajectory} and $H_{\infty}$ \cite{gong2020novel}), nonlinear (e.g., feedback linearization \cite{khan2020nonlinear,chen2021feedback}, consensus-based methods \cite{mondal2024fixed}, backstepping \cite{xia2023output} and sliding mode control \cite{dong2021anti}) and intelligent methods (e.g., model predictive control (MPC) \cite{ding2021representation}, fuzzy logic \cite{sarkhel2020fuzzy} and neural network \cite{liu2021adaptive,xia2023output}). 
Despite these developments, little attention has been paid to merging and splitting control for rigid body networks. Merging and splitting are basic maneuvers for rigid body networks to ensure the reconfigurability of the networks and enhance obstacle avoidance and robustness against faults. 

The main challenges of the merging and splitting control of rigid body networks lie in (1) the scalability with respect to the network size, (2) the distributed and compositional requirements on the controllers, and (3) mesh stability guarantee.
In particular, the scalability requires that the performance of the designed controllers should be preserved as the network size grows. Furthermore, the controllers are not only required to be distributed but also expected to be compositional with the mesh stability guarantee due to the change of topologies.
By compositionality, we mean that the designed controllers do not need to be redesigned after the merging and splitting of agents. As a generalization of the traditional string stability notion to general networks, mesh stability captures the non-increase of the tracking errors along the networks and, thus, guarantees the safety of the agents.
Existing works on merging and splitting control of the rigid body networks mainly rely on consensus-based \cite{mondal2024fixed}, MPC \cite{novoth2020distributed}, or other optimization-based methods \cite{zhu2022merging} and planning-based algorithms \cite{wei2022mpc}. 
However, these methods are either not scalable 
or not compositional. Besides, the mesh stability of the network is usually not ensured without the safety maintenance during the merging and splitting of agents.

Apart from the requirements on the controllers, communication topology, which determines the connectivity between rigid bodies, is another important component for rigid body network control.
In merging and splitting scenarios, the interconnection between rigid bodies may vary correspondingly with the neighboring sets of agents.
However, most of the existing works on the merging and splitting control assume the topologies are fixed \cite{novoth2020distributed,zhu2022merging,wei2022mpc} or dynamically switched \cite{wu2021distributed,wang2023robust}. Moreover, in these works, after such maneuverings, the topology may need to be redesigned for the entire network as the mesh stability may not hold, leading to safety concerns.



Based on the above discussion, the distributed controllers and the communication topologies are of equivalent significance for rigid body networks' merging and splitting. Besides, for the safety and flexibility of the network, both the mesh stability of the closed-loop network and the compositionality of the controllers are required.
To achieve these goals, instead of designing the control and topology separately, we believe a more effective way is to simultaneously design both. This leads to the problem of control and topology co-design for rigid body networks. 
In particular, we first formulate a centralized LMI-based control and topology co-design optimization from the stability, dissipativity, and mesh stability analysis of the rigid body network, where the passivity properties are obtained via local control synthesis. Then, by applying a Sylvester criterion-inspired decentralization technique, the original centralized LMI is broken down into a set of smaller LMIs that can be solved decentrally at each agent, and the original mesh stability constraints are made decentralized through a set of sufficient alternative conditions. A unique advantage of our proposed decentralized solution is that the controllers of the entire network do not need to be redesigned when agents merge and split, which enables seamless maneuvering of the agents.


We have studied the control and topology co-design problem for general networked systems in \cite{WelikalaP32022}, followed by its application in merging and splitting of vehicular platoons with $L_2$ weak string stability \cite{welikala2023dissipativity} and disturbance string stability \cite{Zihao2024CDC} guarantees. In this paper, we extend the application domain of our co-design framework to underactuated nonlinear rigid body networks in 3D spaces with formal mesh stability guarantees.
The main differences and contributions of this paper can be summarized as follows:
\begin{enumerate}
    \item 
    We generalize the control and topology co-design framework in \cite{Zihao2024CDC} to underactuated nonlinear rigid body networks in 3D spaces with formal mesh stability guarantees;
    \item The compositionality of our proposed co-design framework enables flexible merging and splitting of rigid body networks without the need for redesigning after such maneuvers;
    \item To obtain the local dissipativity properties, a novel local control synthesis optimization is proposed without the need to manually select the optimization parameters as compared to our previous work \cite{welikala2025decentralized};
    \item The effectiveness of our proposed co-design method is illustrated via our specifically developed simulator;
    \item Through the comparison to a state-of-the-art consensus-based method, the performance of our proposed co-design method is highlighted.
\end{enumerate}

\vspace{-1mm}
This paper is organized as follows. Some necessary preliminaries and the problem formulation are presented in Section \ref{sec:background} and \ref{sec:problem_formulation}, respectively. Our main results are presented in Section \ref{sec:main_results} and are supported by simulation examples discussed in Section \ref{sec:simulation}. Concluding remarks are provided in Section \ref{sec:conclusion}.

\section{Preliminaries}\label{sec:background}
\paragraph*{\textbf{Notations}}
The sets of real, natural, and positive real numbers are denoted by $\mathbb{R}$, $\mathbb{N}$, and $\mathbb{R}_+$, respectively. $\mathbb{R}^{n\times m}$ denotes the real matrices' space with $n$ rows and $m$ columns. An $n$-dimensional real column vector is denoted by $\mathbb{R}^n$.
Define $\mathcal{I}_N:=\{1, 2,...,N\}$ ($N\in \mathbb{N}$) as the index set. 
A block matrix $A\in\mathbb{R}^{n\times m}$ is represented as $A:=[A_{ij}]_{i\in\mathcal{I}_n, j\in\mathcal{I}_m}$, where $A_{ij}$ is the $(i,j)$\tsup{th} block of $A$ (for indexing purposes, either subscripts or superscripts may be used, i.e., $A_{ij} \equiv A^{ij}$). 
$[A_{ij}]_{j\in \mathcal{I}_m}$ and $\diag([A_{ii}]_{i\in\mathcal{I}_n})$ represent a block row matrix and a block diagonal matrix, respectively. We define $\{A^i\}:=\{A_{ii}\}\cup\{A_{ij},j\in\mathcal{I}_{i-1}\}\cup\{A_{ji}:j\in\mathcal{I}_{i-1}\}$. $SO(3)$ represents the special orthogonal group, i.e., $SO(3):=\{R\in\mathbb{R}^{3\times 3}|R^{\top}R=I,\mbox{det}(R)=1\}$.
To represent the cross product between vectors, we introduce the hat map $\hat{(\cdot)}:\mathbb{R}^3\rightarrow\mathfrak{so}(3)$ (Lie algebra) such that $\hat{x}y=x\times y$ for all $x$, $y\in\mathbb{R}^3$. The inverse of the hat map is defined as
$(\cdot)^{\vee}:\mathfrak{so}(3)\rightarrow\mathbb{R}^3$, which is denoted as the vee map.
The zero and identity matrices are denoted by $\0$ and $\I$, respectively (dimensions will be obvious from the context). 
The sum of a matrix $A$ and its transpose is defined as $\mathcal{S}(A):= A+A^\T$. 
$\mb{1}_{\{\cdot\}}$ is the indicator function and $\mathsf{e}_{ij} := \mb{1}_{\{i=j\}}$. 
The Euclidean norm of a vector is given by $|x|_2 := |x| := \sqrt{x^Tx}$. For matrix $A\in\mathbb{R}^{n\times m}$, its 1-norm and spectral norm are denoted by $\|A\|_1$ and $\|A\|$, respectively.
The $\mathcal{L}_2$ and $\mathcal{L}_{\infty}$ norms of a time-dependent vector are given by $\|x(\cdot)\|=\sqrt{\int_{0}^{\infty}|x(\tau)|^2d\tau}$ and $\|x(\cdot)\|_{\infty} = \sup_{t\geq 0}\ |x(t)|$, respectively. 
We use $\mathcal{K}$, $\mathcal{K}_{\infty}$, and $\mathcal{KL}$ to denote different classes of comparison functions, see, e.g., \cite{sontag1995characterizations}. 
For functions of time $t$, we omit denoting 
their dependence on $t$ when it is unnecessary for ease of expression.

\subsubsection{\textbf{Dissipativity Theory}} 

Consider the dynamics of a general system as 
\begin{equation}\label{Eq:GeneralSystem}
    \begin{cases}
        \dot{x} = f(x,u),  \\
        y=h(x,u),
    \end{cases}
\end{equation}
where $x\in\mathbb{R}^n$, $u\in\mathbb{R}^q$ and $y\in\mathbb{R}^m$ are the system state, input, and output, respectively. 
The function $f:\mathbb{R}^{n}\times\mathbb{R}^{q}\rightarrow\mathbb{R}^{n}$ represents the system dynamic mapping, 
and the function $h:\mathbb{R}^{n}\times\mathbb{R}^{q}\rightarrow\mathbb{R}^{m}$ is the output mapping.
$f$ is assumed to be locally Lipschitz continuous around each equilibrium point $x^* \in \X$ with $f(x^*,u^*) = \0, \forall x^* \in \X \subset \R^n$ ($\X$ denotes a set of equilibrium states, $u^*$ is the input at this equilibrium $x^*$) and both $u^*$ and $y^*:= h(x^*,u^*)$ being implicit functions of $x^*$. 
To analyze the dissipativity of \eqref{Eq:GeneralSystem} without the explicit knowledge of its equilibrium points, the \emph{X-equilibrium-independent dissipativity} ($X$-EID) property \cite{WelikalaP52022} is introduced below.

\begin{definition}(\textit{$X$-EID \cite{WelikalaP52022}})\label{Def:X-EID}
The system \eqref{Eq:GeneralSystem} is $X$-EID under supply rate $s:\R^{q}\times\R^{m}\rightarrow \R$ if there exists a continuously differentiable storage function $V:\R^{n}\times \X \rightarrow \R$ satisfying:
$V(x,x^*)>0$ with $x \neq x^*$, 
$V(x^*,x^*)=0$, and 
$$\dot{V}(x,x^*) = \nabla_x V(x,x^*)f(x,u)\leq  s(u-u^*,y-y^*),$$
for all $(x,x^*,u)\in\R^{n}\times \X \times \R^{q}$, where the supply rate $s$ is of the quadratic form characterized by a symmetric coefficient matrix $X:= [X^{kl}]_{k,l\in\mathcal{I}_2}\in\mathbb{R}^{q+m}$, i.e., 
\begin{center}
    $s(u-u^*,y-y^*):=  
    \bmtx{u-u^*\\y-y^*}^\T 
    \bmtx{X^{11} & X^{12}\\X^{21} & X^{22}}
    \bmtx{u-u^*\\y-y^*}.$
\end{center}
\end{definition}

Note that $X$-EID is equivalent to the conventional ($Q,S,R$)- dissipativity \cite{Willems1972a}, and thus, it also involves IF-OFP($\nu$,$\rho$) (input feedforward-output feedback passivity, with $X:=\scriptsize\bmtx{-\rho\I & \frac{1}{2}\I \\ \frac{1}{2}\I & -\nu\I}$) and L2G($\gamma$) (finite-gain $L_2$-stability, with $X:=\scriptsize\bmtx{\gamma^2\I & \0 \\ \0 & -\I}$) which will be used in this paper.


\subsubsection{\textbf{Network Configuration}} 

Consider a networked system $\Sigma$ comprised of $N$ decoupled subsystems $\{\Sigma_i: i\in\mathcal{I}_N\}$, where the dynamics of each subsystem $\Sigma_i, i\in\mathcal{I}_N$ are 
\begin{equation}\label{Eq:SubsystemDynamics}
    \Sigma_i:\begin{cases}
        \dot{x}_i = f_i(x_i,u_i),\\
        y_i = h_i(x_i,u_i),
    \end{cases}
\end{equation}
where $x_i\in\mathbb{R}^{n_i}$, $u_i\in\mathbb{R}^{q_i}$ and $y_i\in\mathbb{R}^{m_i}$ are the subsystem's state, input and output, respectively. 
In analogous to \eqref{Eq:GeneralSystem}, each $\Sigma_i$ \eqref{Eq:SubsystemDynamics} ($i\in\mathcal{I}_N$) is assumed to have a set $\mathcal{X}_i \subset \R^{n_i}$, where for every $x_i^* \in \mathcal{X}_i$, there is a unique $u_i^* \in \R^{q_i}$ that satisfies $f_i(x_i^*,u_i^*)=0$ while both $u_i^*$ and $y_i^*:= h_i(x_i^*,u_i^*)$ being implicit functions of $x_i^*$.

Note that if the control input of the subsystem \eqref{Eq:SubsystemDynamics} is designed as $u_i:=u_i(\{x_j\}_{j\in\mathcal{N}_i\cup\{i\}},w_i)$, for all $i\in\mathcal{I}_N$, where $w_i$ denotes external disturbances, and the selection of $u_i$ is determined by specific application scenarios, then the subsystems $\Sigma_i$ ($i\in\mathcal{I}_N$) are interconnected. In this way, if for the networked system $\Sigma$, its subsystems $\Sigma_i$ \eqref{Eq:SubsystemDynamics} are interconnected via the static interconnection matrix $M$ (as shown in Fig. \ref{Fig:Interconnection2}) with the relationship:
\begin{equation}\label{Eq:NSC2Interconnection}
    \bmtx{u\\z}= \bmtx{M_{uy} & M_{uw} \\ M_{zy} & M_{zw}}\bmtx{y\\w} \equiv  M \bmtx{y\\w},
\end{equation}
then the closed-loop networked system can be expressed as (also illustrated as in Fig. \ref{Fig:Interconnection2})
\begin{equation}\label{Eq:GeneralSystem_closedLoop}
   \Sigma:\begin{cases}
        \dot{x}=\mathcal{F}(x,w),  \\
        z = \mathcal{H}(x,w),
    \end{cases}
\end{equation}
where $x:=[x_i^\T]_{i\in\mathcal{I}_N}^\T\in\mathbb{R}^n$, $u:=[u_i^\T]_{i\in\mathcal{I}_N}^\T\in\mathbb{R}^q$, $z:=[z_i^\T]_{i\in\mathcal{I}_N}^\T\in\mathbb{R}^l$, $y:=[y_i^\T]_{i\in\mathcal{I}_N}^\T\in\mathbb{R}^m$ and $w:=[w_i^\T]_{i\in\mathcal{I}_N}^\T\in\mathbb{R}^r$ are the stacked system states, stacked control input, stacked performance output, stacked feedback output and stacked external disturbances with $q=\sum_{i\in\mathcal{I}_N} q_i$, $l=\sum_{i\in\mathcal{I}_N} l_i$, $m=\sum_{i\in\mathcal{I}_N} m_i$ and $r=\sum_{i\in\mathcal{I}_N} r_i$, respectively. 
While the blocks $M_{uw}$, $M_{zy}$ and $M_{zw}$ represent the mapping from disturbances to the input, output to performance output, and disturbances to performance output, respectively, the block $M_{uy}$ describes the interconnections between subsystems.
Here, $\mathcal{F}:\mathbb{R}^{n}\times\mathbb{R}^{r}\rightarrow\mathbb{R}^{n}$ is the closed-loop system dynamic mapping, $\mathcal{H}:\mathbb{R}^{n}\times\mathbb{R}^{r}\rightarrow\mathbb{R}^{l}$ is the stacked performance output mapping with $n:=\sum_{i\in\mathcal{I}_N} n_i$, and we assume $g(x_i^*,\0) = \0,\ \forall x^* \in \X$. 

In this way, the closed-loop networked system $\Sigma$ in \eqref{Eq:GeneralSystem_closedLoop} is similar to the form of \eqref{Eq:GeneralSystem}, and hence, the $X$-EID concept is applicable for the closed-loop networked system $\Sigma$ in \eqref{Eq:GeneralSystem_closedLoop}.
Moreover, we assume that each subsystem $\{\Sigma_i: i\in\mathcal{I}_N\}$ of the closed-loop networked system $\Sigma$ in \eqref{Eq:GeneralSystem_closedLoop} is $X_i$-EID, where $X_i = X_i^\T := [X_i^{kl}]_{k,l\in\mathcal{I}_2}$ (see Def. \ref{Def:X-EID}). 

%




\subsubsection{\textbf{X-EID-Based Topology Synthesis and Decentralization}}




\begin{figure}[!t]
    \centering
    \begin{subfigure}{0.21\textwidth}
        \includegraphics[width=\linewidth]{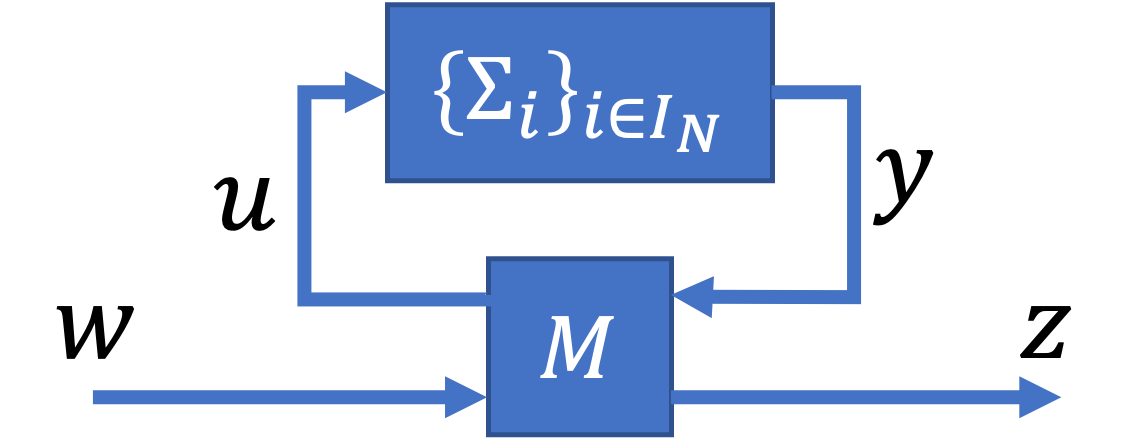}
        \vspace{-2mm}
        \caption{Networked system $\Sigma$}
        \label{Fig:Interconnection2}
    \end{subfigure}
    \begin{subfigure}{0.27\textwidth}
        \includegraphics[width=\linewidth]{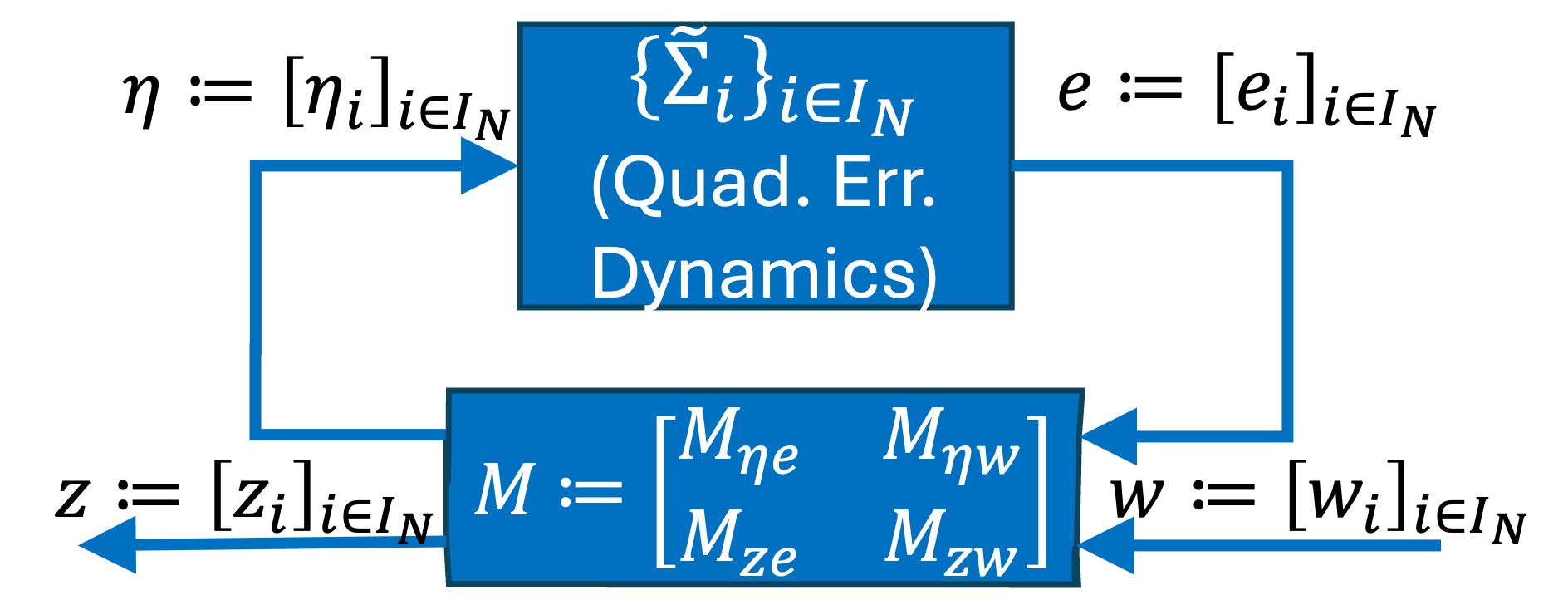}
        \vspace{-3mm}
        \caption{Formation error dynamics $\tilde{\Sigma}$}
        \label{Fig:FormationNetworkForm}
    \end{subfigure}
    \vspace{-6.5mm}
    \caption{
     Network configuration: (a) A generic networked system $\Sigma$; (b) Formation error dynamics as a networked system $\tilde{\Sigma}$.
     }
    \label{fig:networked_system_configuration}
\end{figure}

Based on the networked setup in \eqref{Eq:GeneralSystem_closedLoop} (also shown in Fig. \ref{Fig:Interconnection2}), we solve an LMI-based topology synthesis problem using the subsystem's $X_i$-EID properties to enforce specifications like $L_2$-stability as shown below.

\begin{proposition}\textit{\cite{WelikalaP52022}}\label{Pr:NSC2Synthesis}
The networked system $\Sigma$ in \eqref{Eq:GeneralSystem_closedLoop} can be made $L_2$-stable with the finite $L_2$-gain $\gamma$ by solving the following LMI problem to get the interconnection matrix $M$ in \eqref{Eq:NSC2Interconnection}:
\begin{equation}\label{Eq:Pr:NSC2Synthesis}
    \begin{aligned}
    & \mbox{Find: } L_{uy}, L_{uw}, M_{zy}, M_{zw}, \{p_i: i\in\mathcal{I}_N\}  \\
    & \mbox{s.t. } p_i > 0,\ \forall i\in\mathcal{I}_N,\  \mbox{and }   \\
	& {\scriptsize \bmtx{
		\textbf{X}_p^{11} & \0 & L_{uy} & L_{uw} \\
		\0 & \I & M_{zy} &  M_{zw}\\ 
		L_{uy}^\T & M_{zy}^\T & -L_{uy}^\T \textbf{X}^{12}-\textbf{X}^{21}L_{uy}-\textbf{X}_p^{22} & -\textbf{X}^{21}L_{uw} \\
		L_{uw}^\T & M_{zw}^\T & -L_{uw}^\T \textbf{X}^{12} &  \gamma^2\I
	} \normalsize >0},      
    \end{aligned}
\end{equation}
where $\textbf{X}^{12}:= \diag([(X_i^{11})^{-1}X_i^{12}]_{i\in\mathcal{I}_N})$ (we assume $X_i^{11}>0$) and $\textbf{X}^{21}:= (\textbf{X}^{12})^\T$ with 
$M_{uy}:= (\textbf{X}_p^{11})^{-1} L_{uy}$ and $M_{uw}:=  (\textbf{X}_p^{11})^{-1} L_{uw}$.
\end{proposition}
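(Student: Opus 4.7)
The plan is to prove this by combining the subsystem dissipativity inequalities via a weighted sum into a single dissipation inequality for the closed-loop network, then using the interconnection relation \eqref{Eq:NSC2Interconnection} to eliminate $u$ and $z$ in favor of $y$ and $w$, and finally rewriting the resulting matrix inequality as a linear one through a Schur complement step and the change of variables $L_{uy}=\textbf{X}_p^{11}M_{uy}$, $L_{uw}=\textbf{X}_p^{11}M_{uw}$.

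First, since each $\Sigma_i$ is $X_i$-EID with storage function $V_i$, I would construct the composite storage function $V(x,x^*):=\sum_{i\in\mathcal{I}_N} p_i V_i(x_i,x_i^*)$ with weights $p_i>0$ to be chosen. Summing the per-subsystem dissipation inequalities yields $\dot V \leq \bmtx{u-u^*\\ y-y^*}^{\!\T}\textbf{X}_p\bmtx{u-u^*\\ y-y^*}$, where $\textbf{X}_p=[\textbf{X}_p^{kl}]_{k,l\in\mathcal{I}_2}$ collects the $p_iX_i^{kl}$ blocks. Since at an equilibrium the closed-loop map satisfies $\mathcal{F}(x^*,\0)=\0$, one has $u^*$ compatible with $w^*=\0$, so that after substituting $u-u^*=M_{uy}(y-y^*)+M_{uw}w$ and $z=M_{zy}(y-y^*)+M_{zw}w$ the $L_2$-stability target $\dot V\leq \gamma^2|w|^2-|z|^2$ becomes a quadratic form in $[y-y^*;\,w]$ that we require to be $\le 0$.

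Writing this quadratic form gives a matrix inequality of the form $\mathcal{S}(\textbf{X}_p^{21}M_{uy})+M_{uy}^{\!\T}\textbf{X}_p^{11}M_{uy}+\textbf{X}_p^{22}+M_{zy}^{\!\T}M_{zy}\le 0$ (with mixed $y$-$w$ and $w$-$w$ blocks of similar shape). The nonlinearities in $M_{uy},M_{uw}$ arise from the $\textbf{X}_p^{11}$-weighted quadratic terms and the $M_{zy}^{\!\T}M_{zy}$ term. The key step is to peel these off by two successive Schur complements: one to expose $\textbf{X}_p^{11}$ as a diagonal block (legitimate because the assumption $X_i^{11}>0$ with $p_i>0$ makes $\textbf{X}_p^{11}\succ 0$, hence invertible) and one to expose the identity block coming from $M_{zy}^{\!\T}M_{zy}+M_{zw}^{\!\T}M_{zw}$. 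After these manipulations the only bilinear terms that remain are the products $\textbf{X}_p^{11}M_{uy}$ and $\textbf{X}_p^{11}M_{uw}$, which are precisely linearized by introducing the new decision variables $L_{uy}:=\textbf{X}_p^{11}M_{uy}$ and $L_{uw}:=\textbf{X}_p^{11}M_{uw}$. Recognizing $\textbf{X}^{12}=(\textbf{X}_p^{11})^{-1}\textbf{X}_p^{12}$ (the $p_i$'s cancel) lets the off-diagonal terms be written in the stated form, recovering the LMI \eqref{Eq:Pr:NSC2Synthesis}.

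The step I expect to be most delicate is keeping the block structure aligned through the two Schur complement operations while tracking which sub-blocks absorb the $p_i$ weights and which do not; in particular, verifying that the indefinite cross term $-L_{uy}^{\!\T}\textbf{X}^{12}-\textbf{X}^{21}L_{uy}$ arises from the correct combination of $\mathcal{S}(\textbf{X}_p^{21}M_{uy})$ with the Schur-complement rearrangement, and that the assumption $X_i^{11}>0$ is indeed what is needed (rather than some weaker condition) so that the variable change is invertible. Once that bookkeeping is done, feasibility of \eqref{Eq:Pr:NSC2Synthesis} returns an admissible $M$ and a valid composite storage function, completing the proof by integration of the dissipation inequality to obtain the $L_2$-gain bound $\|z\|_{\mathcal{L}_2}\le \gamma\|w\|_{\mathcal{L}_2}$.
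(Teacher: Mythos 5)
Your proposal is correct, but note that the paper itself gives no proof of Proposition~\ref{Pr:NSC2Synthesis} --- it is imported verbatim from \cite{WelikalaP52022}. Your reconstruction (weighted composite storage function $\sum_i p_i V_i$, substitution of the interconnection relation to obtain a quadratic form in $(y-y^*,w)$, then the change of variables $L_{uy}=\textbf{X}_p^{11}M_{uy}$, $L_{uw}=\textbf{X}_p^{11}M_{uw}$ together with a Schur complement against the block $\diag(\textbf{X}_p^{11},\I)$, using $X_i^{11}>0$ so that the $p_i$'s cancel in $\textbf{X}^{12}=(\textbf{X}_p^{11})^{-1}\textbf{X}_p^{12}$) is exactly the standard derivation used in that reference, so there is nothing to flag.
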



To break the centralized LMI in \eqref{Eq:Pr:NSC2Synthesis} into smaller LMIs and solve it in an equivalently decentralized manner, we recall Sylvester's criterion \cite{Antsaklis2006} inspired decentralization technique from \cite{WelikalaP32022}, i.e., a compositional verification of the positive definiteness of a symmetric block matrix.


\begin{proposition} \cite{WelikalaP32022}\label{Pr:MainProposition}
A symmetric $N \times N$ block matrix $W = [W_{ij}]_{i,j\in\mathcal{I}_N} > 0$ iff
\begin{equation}\label{Eq:Pr:MainProposition1}
    \tilde{W}_{ii} := W_{ii} - \tilde{W}_i \mathcal{D}_i \tilde{W}_i^\T > 0, \ \ \ \ \forall i\in\mathcal{I}_N,
\end{equation}
where $\tilde{W}_i := [\tilde{W}_{ij}]_{j\in\mathcal{I}_{i-1}} := W_i(\mathcal{D}_i\mathcal{A}_i^\T)^{-1}$, 
$W_i :=  [W_{ij}]_{j\in\mathcal{I}_{i-1}}$,
$\mathcal{D}_i := \diag([\tilde{W}_{jj}^{-1}]_{j\in\mathcal{I}_{i-1}})$, and $\mathcal{A}_i$ is the block lower-triangular matrix created from $[\tilde{W}_{kl}]_{k,l\in\mathcal{I}_{i-1}}$.
\end{proposition}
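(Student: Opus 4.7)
The plan is to recognize Proposition~\ref{Pr:MainProposition} as the block-matrix analogue of Sylvester's criterion realized through an iterated block Schur complement, and to prove the equivalence by induction on the leading principal block submatrices $W^{(i)} := [W_{kl}]_{k,l\in\mathcal{I}_i}$.

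First, I would establish by induction on $i$ the block factorization
$$W^{(i-1)} = \mathcal{A}_i \mathcal{D}_i \mathcal{A}_i^\T, \qquad i \geq 2.$$
The base case $i=2$ holds because $\mathcal{A}_2 = \tilde{W}_{11} = W_{11}$ and $\mathcal{D}_2 = W_{11}^{-1}$. For the inductive step, I would partition $W^{(i-1)}$ as $\begin{bmatrix} W^{(i-2)} & W_{i-1}^\T \\ W_{i-1} & W_{i-1,i-1}\end{bmatrix}$, use the inductive hypothesis $W^{(i-2)}=\mathcal{A}_{i-1}\mathcal{D}_{i-1}\mathcal{A}_{i-1}^\T$, and expand $\mathcal{A}_i\mathcal{D}_i\mathcal{A}_i^\T$ block-by-block; the off-diagonal blocks match $W_{i-1}$ by the very definition $\tilde{W}_{i-1}\mathcal{D}_{i-1}\mathcal{A}_{i-1}^\T = W_{i-1}$, while the $(i-1,i-1)$ block reproduces $W_{i-1,i-1}$ via the identity $\tilde{W}_{i-1,i-1} = W_{i-1,i-1} - \tilde{W}_{i-1}\mathcal{D}_{i-1}\tilde{W}_{i-1}^\T$.

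Second, I would use this factorization to rewrite $\tilde{W}_{ii}$ as the Schur complement of $W^{(i-1)}$ in $W^{(i)}$. Using $\tilde{W}_i = W_i(\mathcal{D}_i \mathcal{A}_i^\T)^{-1}$, a direct computation gives
$$\tilde{W}_i \mathcal{D}_i \tilde{W}_i^\T
= W_i (\mathcal{A}_i \mathcal{D}_i \mathcal{A}_i^\T)^{-1} W_i^\T
= W_i (W^{(i-1)})^{-1} W_i^\T,$$
hence $\tilde{W}_{ii} = W_{ii} - W_i (W^{(i-1)})^{-1} W_i^\T$, which is exactly the Schur complement.

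Third, I would conclude by a nested induction on $i$ using the classical Schur complement criterion: provided $W^{(i-1)} > 0$, one has $W^{(i)} > 0$ iff $\tilde{W}_{ii} > 0$. Chaining this equivalence from $i=1$ (where $\tilde{W}_{11}=W_{11}$) up to $i=N$ yields $W = W^{(N)} > 0 \iff \tilde{W}_{ii}>0\ \forall i\in\mathcal{I}_N$, which is the desired statement.

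The main obstacle will be keeping the two inductions consistent: the factorization $W^{(i-1)} = \mathcal{A}_i\mathcal{D}_i\mathcal{A}_i^\T$ and the definition of $\tilde{W}_i$ both require that $\mathcal{D}_i$ and $\mathcal{A}_i$ be invertible, i.e., that $\tilde{W}_{jj} > 0$ for $j < i$, which is precisely what the outer Sylvester-type induction is also trying to establish. Care must therefore be taken to interleave the two inductions so that, at step $i$, the hypotheses $\tilde{W}_{jj} > 0$ for $j < i$ (and hence $W^{(i-1)}>0$ together with the well-definedness of all quantities in \eqref{Eq:Pr:MainProposition1}) are already in place before applying the Schur complement step. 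The converse direction ($W>0$ implies $\tilde{W}_{ii}>0$) likewise uses that every leading principal block submatrix of a positive definite matrix is positive definite.
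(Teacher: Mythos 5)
Your proposal is correct: the factorization $W^{(i-1)}=\mathcal{A}_i\mathcal{D}_i\mathcal{A}_i^\T$, the identification $\tilde{W}_{ii}=W_{ii}-W_i(W^{(i-1)})^{-1}W_i^\T$ as a block Schur complement, and the chained Schur-complement equivalence (with the interleaved induction guaranteeing invertibility of the $\tilde{W}_{jj}$) together constitute a complete proof. The paper itself states this proposition without proof, deferring to the cited reference, and your argument is essentially the standard iterated block-Cholesky/Schur-complement derivation used there, so no substantive difference in approach arises.
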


\subsubsection{\textbf{Mesh Stability}}

Similar to string stability in vehicular platoons, to capture the perturbation (e.g., external disturbances and tracking errors) propagation over general networked systems, we recall the \emph{Mesh Stability} concepts. 
For the networked system \eqref{Eq:GeneralSystem_closedLoop}, our mesh stability analysis is based on the time-domain (as opposed to frequency-domain) notion named \emph{scalable Mesh Stability (sMS)} introduced in \cite{mirabilio2021scalable}.


\begin{definition}(\textit{sMS \cite{mirabilio2021scalable}})\label{def:sMS}
    The networked system \eqref{Eq:GeneralSystem_closedLoop} around the equilibrium point $x^* \in \X$ is \emph{scalable mesh stable} (sMS), if there exist functions $\beta\in\mathcal{KL}$, $\sigma\in\mathcal{K}_{\infty}$, and constants $c_x$, $c_{w}>0$, such that for any initial condition $x_i(0)$ and disturbance $w_i$, $i\in\mathcal{I}_N$ satisfying
    \begin{equation}\label{Eq:Def:DSS_initial_bounds}
        \sup_{i\in\mathcal{I}_N}|x_i(0)-x_i^*|<c_x,\ \mbox{ and } \sup_{i\in\mathcal{I}_N}\|w_i\|_{\infty}<c_w,
    \end{equation}
    respectively, the solution $x_i(t)$ of the subsystem $\Sigma_i$ ($i\in\mathcal{I}_N$) of the networked system $\Sigma$ in \eqref{Eq:GeneralSystem_closedLoop} exists for all $t\geq 0$ and satisfies
    \begin{equation}\label{Eq:Def:DSS}
        \sup_{i\in\mathcal{I}_N} |x_i-x_i^*|\leq\beta(\sup_{i\in\mathcal{I}_N}|x_i(0)-x_i^*|,t)+\sigma (\sup_{i\in\mathcal{I}_N}\|w_i\|_{\infty}), 
    \end{equation}
    for all $t\geq 0$ and any $N\in\N$.
\end{definition}
\begin{remark}\label{Rm:conditions_for_sMS}    
The sMS concept in Def. \ref{def:sMS} indicates that as the tracking errors $|x_i-x_i^*|$ ($\forall i\in\mathcal{I}_N$) propagate over the network, they are uniformly bounded regardless of the total number of subsystems $N$. 
\end{remark}

\begin{proposition}\cite{mirabilio2021scalable}\label{Pr:sufficient_condition_DSS}
Suppose that each subsystem $\Sigma_i$ in the networked system \eqref{Eq:GeneralSystem_closedLoop} is input-to-state stable (ISS) that satisfies
\begin{equation*}\label{Eq:Def:ISS}
    |x_i-x_i^*|\leq\beta_i(|x_i(0)-x_i^*|,t)+\sigma_{xi}(\max_{j\in\mathcal{E}_i} \|x_j\|_{\infty})+\sigma_{wi}(\|w_i\|_{\infty}),
\end{equation*}
where $\beta_i\in\mathcal{KL}$, and $\sigma_{xi}$, $\sigma_{wi}\in\mathcal{K}_{\infty}$, for all $t\geq 0$,
and the conditions \eqref{Eq:Def:DSS_initial_bounds} hold for all $i\in\mathcal{I}_N$. Then, the closed-loop networked system \eqref{Eq:GeneralSystem_closedLoop} is sMS if there exist scalars $\bar{\sigma}_{xi}\in(0,1)$ such that 
\begin{equation}\label{Eq:Pr:weak_coupling_condition}
    \sigma_{xi}(s)\leq \bar{\sigma}_{xi}s
\end{equation}
holds for all $s\in\mathbb{R}_{\geq 0}$ and $i\in\mathcal{I}_N$.
\end{proposition}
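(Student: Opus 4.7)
My plan is to establish \eqref{Eq:Def:DSS} by chaining the subsystem-level ISS estimates and then absorbing the inter-subsystem coupling by means of \eqref{Eq:Pr:weak_coupling_condition}, in the spirit of a nonlinear small-gain argument as in \cite{mirabilio2021scalable}.

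First, I substitute $\sigma_{xi}(s)\leq \bar{\sigma}_{xi}s$ into the hypothesised ISS estimate of each subsystem $\Sigma_i$, which gives
\[
|x_i(t)-x_i^*| \leq \beta_i(|x_i(0)-x_i^*|,t) + \bar{\sigma}_{xi}\max_{j\in\mathcal{E}_i}\|x_j-x_j^*\|_{\infty} + \sigma_{wi}(\|w_i\|_{\infty}).
\]
Because the neighborhood maximum is dominated by the network-wide supremum, replacing $\max_{j\in\mathcal{E}_i}\|x_j-x_j^*\|_{\infty}$ with $\sup_{k\in\mathcal{I}_N}\|x_k-x_k^*\|_{\infty}$ yields an estimate that is structurally independent of the interconnection graph, which is what will eventually allow the resulting constants to be chosen independently of $N$.

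Second, I take the supremum over $i\in\mathcal{I}_N$ and over $t\geq 0$ on both sides, using that $\beta_i(\cdot,t)$ is non-increasing in $t$ so its time-supremum is $\beta_i(|x_i(0)-x_i^*|,0)$. Writing $\bar{\sigma}:=\sup_i \bar{\sigma}_{xi}<1$ and $E_{\infty}:=\sup_{i,\,t\geq 0}|x_i(t)-x_i^*|$, the resulting inequality is self-referential in $E_{\infty}$, and because $\bar{\sigma}<1$ the coupling term can be brought to the left-hand side to give
\[
E_{\infty} \leq \frac{1}{1-\bar{\sigma}}\Big[\sup_i \beta_i(|x_i(0)-x_i^*|,0) + \sup_i \sigma_{wi}(\|w_i\|_{\infty})\Big].
\]
Under the initial-condition bounds \eqref{Eq:Def:DSS_initial_bounds}, the right-hand side is finite, which certifies global existence of $x_i(t)$ on $[0,\infty)$ and already supplies the $\sigma\in\mathcal{K}_{\infty}$ piece of \eqref{Eq:Def:DSS} with a size independent of $N$.

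Third, to recover the $\mathcal{KL}$ transient piece I go back to the time-dependent inequality and iterate the contraction: at each pass the $\beta_i$ terms push $|x_j(s)-x_j^*|$ towards zero while the coupling contributes only a factor $\bar{\sigma}<1$, so summing the resulting geometric series $\sum_k \bar{\sigma}^k$ and taking a uniform majorant over $\{\beta_i\}_{i\in\mathcal{I}_N}$ produces a single $\beta\in\mathcal{KL}$ in $\sup_i|x_i(0)-x_i^*|$. Combined with the $\sigma$ from the previous step, this delivers \eqref{Eq:Def:DSS} for all $t\geq 0$ and any $N\in\N$. The main obstacle lies precisely in this last step, because the individual $\beta_i$ may differ across the network and the coupling term is a causal supremum that ties the value at time $t$ to the entire past trajectory; passing from the pointwise contraction to a single uniform $\mathcal{KL}$ envelope therefore requires both a uniform majorant over $\{\beta_i\}$ and a careful causal restriction of the supremum, after which the bound $\bar{\sigma}<1$ supplies exactly the contraction needed to close the small-gain loop.
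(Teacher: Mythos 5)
The paper does not prove this proposition; it is imported verbatim from \cite{mirabilio2021scalable}, so there is no in-paper argument to compare against. Judged on its own terms, your sketch follows the standard ISS small-gain route, and the first step (replacing $\sigma_{xi}$ by the linear bound $\bar{\sigma}_{xi}s$ and enlarging the neighborhood maximum to the network-wide supremum) is fine. But there are two genuine gaps. First, the algebraic step where you ``bring the coupling term to the left-hand side'' of the self-referential inequality $E_{\infty}\leq a+\bar{\sigma}E_{\infty}$ is only valid if you already know $E_{\infty}<\infty$; a priori you do not, since global existence and boundedness are part of what must be proved. The standard repair is to work with the truncated supremum $E_T:=\sup_{i}\sup_{0\leq t\leq T}|x_i(t)-x_i^*|$, which is finite on the maximal interval of existence, derive the bound $E_T\leq a/(1-\bar{\sigma})$ uniformly in $T$, and then invoke a continuation argument to get global existence and the global bound. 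As written, your argument is circular at exactly this point.

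Second, and more seriously, your recovery of the $\mathcal{KL}$ transient does not go through with the coupling term in the form $\bar{\sigma}_{xi}\sup_j\|x_j\|_{\infty}$, because that supremum is taken over all time and is therefore a constant in $t$: iterating the estimate never makes the coupling contribution decay, so the geometric series you describe only reproduces the asymptotic-gain bound from the previous step, not a function of class $\mathcal{KL}$ in $t$. To close the loop one must either (i) use a causal, time-restricted version of the ISS estimate (supremum over $[0,t]$, and then over shifted windows such as $[t/2,t]$ in the iteration) so that the transient of each neighbor genuinely propagates as a decaying term, or (ii) establish the uniform asymptotic gain property together with uniform local stability and appeal to an ISS characterization theorem. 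You explicitly flag this as ``the main obstacle'' and describe what would be needed, but you do not carry it out; since this is precisely the content of the small-gain theorem being invoked, the proof is incomplete without it. A smaller point: for the final bound to hold ``for any $N\in\N$'' with a single pair $(\beta,\sigma)$, you also need the majorants over $\{\beta_i\}$, $\{\sigma_{wi}\}$ and the bound $\sup_i\bar{\sigma}_{xi}<1$ to be uniform in $N$, which should be stated as an explicit hypothesis rather than assumed in passing.
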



\section{Error Dynamics Modeling and Problem Formulation}\label{sec:problem_formulation}
\subsection{Error Dynamics Modeling}

Consider a 3D rigid body dynamics of the $i\tsup{th}$ agent
$\Sigma_i$ in the network $\Sigma$ as \cite{lee2010geometric}:
    \begin{equation}\label{Eq:rigidBody_dynamics_i_SE3}
       \Sigma_{i}:
        \begin{cases}
        \dot{x}_i(t) = v_i(t),  &  \\
        m_i\dot{v}_i(t) = -f_i(t)R_i(t)e_3+m_ige_3+d_{vi}(t),  & \\
        \dot{R}_i(t) = R_i(t)\hat{\Omega}_i(t),  & \\
        J_i\dot{\Omega}_i(t) = -\hat{\Omega}_i(t) J_i\Omega_i(t)+M_i(t)+d_{\Omega i}(t),   &
        \end{cases}
    \end{equation}
for each $i\in\mathcal{I}_N$ ($N$ is not fixed), where $x_i(t), v_i(t)\in\mathbb{R}^3$ are the position and translational velocity of the rigid body in inertial frame; $\Omega_i(t)\in\mathbb{R}^3$ is the angular velocity in body-fixed frame, respectively; $R_i(t)\in SO(3)$ represents the orientation of the rigid body with respect to the world frame; $m_i\in\mathbb{R}_+$ is the mass of the rigid body; $J_i\in\mathbb{R}_+^{3\times 3}$ is the inertia matrix around its center of mass; $g=9.81\mbox{m/s}^2$ is the gravitational acceleration constant; $d_{vi}(t)$, $d_{\Omega i}(t)\in\mathbb{R}^3$ are the bounded time-varying external disturbances;
The thrust force $f_i(t)\in\mathbb{R}$ and the torque $M_i(t)\in\mathbb{R}^3$ are the actual control inputs that we want to design.
Here, $e_3:=[0,0,1]^{\top}$ is the basis vector of the vertical direction.

\begin{remark}
    Note that the model \eqref{Eq:rigidBody_dynamics_i_SE3} can be easily generalized to other rigid body dynamics such as satellites and underwater vehicles by refining the terms of thrust $f_iR_ie_3$, gravity $m_ige_3$, and torque $M_i$ \cite{roza2012motion}. 
\end{remark}

Suppose each rigid body tracks some given trajectory $(x_{di}(t),v_{di}(t))$, for all $i\in\mathcal{I}_N$, where $x_{di}(t)$, $v_{di}(t)\in\mathbb{R}^3$ are the desired position and translational velocity that are pre-defined as some smooth functions of time. 
Unlike the desired position $x_{di}$ and desired translational velocity $v_{di}$, the desired orientation $R_{di}(t)\in SO(3)$ and desired angular velocity $\Omega_{di}(t)\in\mathbb{R}^3$ are determined by the nominal thrust force $f_{di}$ applied (the exact form of $f_{di}$ will be introduced later).
Based on these desired signals, we define the position, translational velocity, orientation, and angular velocity tracking errors, respectively, as:
\begin{subequations}\label{Eq:tracking_errors}
\begin{gather}
    e_{xi} :=x_i-x_{di}, \ \  e_{vi} :=v_i-v_{di},  \label{Eq:e_xi_and_e_vi}  \\
    e_{Ri} := \frac{1}{2}(R_{ei}-R_{ei}^\T)^{\vee},\ \ e_{\Omega i} :=\Omega_i-R_{ei}^\T\Omega_{di},     \label{Eq:e_Ri_and_e_Omegai}
\end{gather}
\end{subequations}
where $R_{ei}:=R_{di}^\T R_i\in SO(3)$. The computation of $R_{di}$ and $\Omega_{di}$ will be introduced later on.

It is worth mentioning that the derivative of the orientation tracking errors is \cite{lee2010geometric}:
\begin{align}
    \dot{e}_{Ri} &= \frac{1}{2}(R_{di}^\T R_i\hat{e}_{\Omega i}+\hat{e}_{\Omega i}R_i^\T R_{di})^{\vee}  \nonumber  \\
        &= C(R_{di}^\T R_i)e_{\Omega i}\equiv C_i e_{\Omega i},
\end{align}
where we denote $C(\cdot):=\frac{1}{2}(\mbox{tr}[(\cdot)^\T] \I-(\cdot)^\T)$ and $C_i:=C(R_{di}^\T R_i)$.

Note that the system \eqref{Eq:rigidBody_dynamics_i_SE3} is underactuated. To overcome this challenge in the design process, we use the so-called nominal thrust force concept and recall the following proposition.
\begin{proposition}(Nominal Thrust Force \textit{\cite{lee2010geometric}})\label{Pr:nominal_thrust_force_underactuation}
    With the desired orientation $R_{di}$, the thrust force term $-f_iR_ie_3$ in \eqref{Eq:rigidBody_dynamics_i_SE3} can be equivalently written as $-f_iR_ie_3 = f_{di}-X_i$ (with $f_{di}$, $X_i\in\mathbb{R}^3$), where $f_{di}$ is the nominal thrust force that is free to be designed and
\begin{align}\label{Eq:Xi}
    X_i=|f_{di}|((e_3^\T R_{ei}e_3)R_ie_3-R_{di}e_3)
\end{align}
is the nonlinear coupling term caused by the underactuation of the rigid body. Finally, based on the designed nominal thrust force $f_{di}$, the actual thrust force $f_i$ ($i\in\mathcal{I}_N$) can now be obtained by
\begin{equation}\label{Eq:thrust_force_by_nominal_thrust_force}
    f_i = -f_{di}^\T R_i e_3 = (|f_{di}|R_{di}e_3)^\T R_ie_3.
\end{equation}
\end{proposition}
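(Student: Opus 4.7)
The plan is to exploit the standard geometric convention from \cite{lee2010geometric} that the desired attitude $R_{di}$ is chosen with its third body axis aligned with the (negated) nominal thrust direction, namely $R_{di}e_3 = -f_{di}/|f_{di}|$, equivalently $f_{di} = -|f_{di}|\,R_{di}e_3$. This identification is what ties the free design variable $f_{di}$ to the $R_{di}$ appearing implicitly in \eqref{Eq:e_Ri_and_e_Omegai}, and it is the only geometric input needed for the algebra that follows.

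First I would verify the two expressions for $f_i$ in \eqref{Eq:thrust_force_by_nominal_thrust_force}. The formula $f_i = -f_{di}^\T R_i e_3$ selects $f_i$ as the projection of the (negated) nominal thrust onto the body $z$-axis $R_i e_3$, which is the unique direction along which the underactuated rigid body can produce thrust. The second equality $(|f_{di}|R_{di}e_3)^\T R_i e_3$ is then immediate by substituting $f_{di} = -|f_{di}|R_{di}e_3$ into the first.

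Next I would establish the decomposition $-f_i R_i e_3 = f_{di} - X_i$ by direct algebra. Substituting the formula for $f_i$ and using $R_{ei} := R_{di}^\T R_i$,
\begin{equation*}
-f_i R_i e_3 = (f_{di}^\T R_i e_3)\,R_i e_3 = -|f_{di}|\,(e_3^\T R_{di}^\T R_i e_3)\,R_i e_3 = -|f_{di}|\,(e_3^\T R_{ei}e_3)\,R_i e_3.
\end{equation*}
On the other hand, from $f_{di} = -|f_{di}|R_{di}e_3$ and the definition of $X_i$ in \eqref{Eq:Xi},
\begin{equation*}
f_{di} - X_i = -|f_{di}|R_{di}e_3 - |f_{di}|\bigl((e_3^\T R_{ei}e_3)\,R_i e_3 - R_{di}e_3\bigr) = -|f_{di}|\,(e_3^\T R_{ei}e_3)\,R_i e_3,
\end{equation*}
so both sides coincide, completing the identity.

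There is no real technical obstacle; the proposition is an algebraic bookkeeping identity that separates the actuatable part of the thrust (namely $f_{di}$, expressed in the world frame) from the residual coupling $X_i$ produced by the attitude mismatch $R_i \neq R_{di}$. The only subtlety worth flagging is making the convention on $R_{di}$ explicit, after which one can sanity-check the construction: at perfect attitude tracking $R_i = R_{di}$ one has $R_{ei} = I$, whence $e_3^\T R_{ei}e_3 = 1$ and $R_i e_3 = R_{di}e_3$, so $X_i = 0$ and $-f_i R_i e_3 = f_{di}$, as expected.
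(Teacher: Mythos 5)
Your proof is correct: the paper itself gives no proof of this proposition (it is recalled from the cited reference), and your algebra is exactly the standard derivation, correctly using the convention $R_{di}e_3=-f_{di}/|f_{di}|$ that the paper fixes in its remark on computing $R_{di}$. Both sides indeed reduce to $-|f_{di}|(e_3^\T R_{ei}e_3)R_ie_3$, and your sanity check at $R_i=R_{di}$ confirms the decomposition isolates the underactuation residual $X_i$.
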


\begin{remark}\label{Rm:R_di_Omega_di_computation}
    In \eqref{Eq:e_Ri_and_e_Omegai}, the desired orientation is computed by 
    \begin{equation*}
        R_{di}:=\bmtx{-\frac{(\hat{b}^2_{d3i})b_{d1i}}{|(\hat{b}^2_{d3i})b_{d1i}|}, & \hat{b}_{d3i} b_{d1i}, & b_{d3i}}
    \end{equation*}
    based on the desired direction of the first and the third body-fixed axis, i.e., $b_{d1i}$, $b_{d3i}\in S^2:=\{b\in\mathbb{R}^3: |b|=1\}$. Here, we select $b_{d1i}=\frac{v_{di}}{|v_{di}|}$ and the desired direction of the third body-fixed axis $b_{d3i}$ is computed as $b_{d3i}=-\frac{f_{di}}{|f_{di}|}$. Using this $R_{di}$, we can then approximate the desired angular velocity by $\Omega_{di}=\frac{1}{2\delta t}(R_{di}(t_1)^\T R_{di}(t_2)-R_{di}(t_2)^\T R_{di}(t_1))^{\vee}$, where $\delta t:=t_2-t_1$ is a small time interval between two time steps $t_1$ and $t_2$ with $t_1$, $t_2\geq 0$. 
\end{remark}

Define the tracking error vector $e_i := [e_{xi}^{\top},e_{vi}^{\top},e_{Ri}^{\top},e_{\Omega i}^{\top}]^{\top}$, for all $i\in\mathcal{I}_N$. Then, the tracking error dynamics of the $i\tsup{th}$ rigid body are:
\begin{align}\label{Eq:rigidbody_error_dynamics_original}
    \tilde{\Sigma}_i:\begin{bmatrix}
        \dot{e}_{xi}   \\
        \dot{e}_{vi}   \\
        \dot{e}_{Ri}   \\
        \dot{e}_{\Omega i}
    \end{bmatrix}=&\begin{bmatrix}
        \0 & \I & \0 & \0    \\
        \0 & \0 & \0 & \0    \\
        \0 & \0 & \0 & C_i   \\
        \0 & \0 & \0 & \0
    \end{bmatrix}\begin{bmatrix}
        e_{xi}   \\
        e_{vi}   \\
        e_{Ri}   \\
        e_{\Omega i}
    \end{bmatrix}+\begin{bmatrix}
        \0 & \0  \\
        \I & \0  \\
        \0 & \0  \\
        \0 & \I
    \end{bmatrix}*    \nonumber    \\
    &  \bigg(\begin{bmatrix}
        u_{1i} \\
        u_{2i}
    \end{bmatrix}+\begin{bmatrix}
        -\frac{1}{m_i}X_i+d'_{vi} \\
        d'_{\Omega i}
    \end{bmatrix}\bigg),
\end{align}
where the dynamics of $e_{xi}$ and $e_{vi}$ are named as outer-loop error dynamics and the dynamics of $e_{Ri}$ and $e_{\Omega i}$ are named as inner-loop error dynamics. The new control input components $u_{1i}$, $u_{2i}$ are defined as
\begin{subequations}
    \begin{align}
    u_{1i}&:=\frac{1}{m_i}f_{di}-\dot{v}_{di}+ge_3,  \label{Eq:u_1i}  \\
    u_{2i}&:=-J_i^{-1}\hat{\Omega}_iJ_i\Omega_i+J_i^{-1}M_i+\hat{\Omega}_iR_i^\T R_{di}\Omega_{di}-     \nonumber \\
        & R_i^\T R_{di}\dot{\Omega}_{di},      \label{Eq:u_2i}
\end{align}
\end{subequations}
and disturbance components now become $d'_{vi}:=\frac{1}{m_i}d_{vi}$ and $d'_{\Omega i} := J_i^{-1}d_{\Omega i}$.

\begin{remark}
    For the control input \eqref{Eq:u_1i}, we cannot cancel out the term $-\frac{1}{m_i}X_i$ in \eqref{Eq:rigidbody_error_dynamics_original}, since the nonlinear coupling term $X_i$ is caused by underactuation and it involves the nominal thrust force $f_{di}$ as seen in \eqref{Eq:Xi}.
\end{remark}

\subsection{Problem Formulation}

The architecture of the error dynamics of \eqref{Eq:rigidbody_error_dynamics_original} is shown in Fig. \ref{fig:rigidBodyNetwork_Control_Architecture}, where the position and translational velocity error dynamics are named as outer-loop dynamics and the orientation and angular velocity error dynamics are named as inner-loop dynamics, due to the different time scale they follow.
Note that the control objective of the inner-loop error dynamics is to stabilize the tracking errors $e_{Ri}$ and $e_{\Omega i}$, i.e., to ensure that $e_{Ri}$, $e_{\Omega i}\rightarrow 0$ as $t\rightarrow\infty$. Hence, to achieve the control and topology co-design for the rigid body networks, we have to first stabilize the inner-loop error dynamics by introducing the control component $u_{2i}$ as \cite{lee2010geometric}:
\begin{equation}\label{Eq:u_2i_designed}
    u_{2i}=-k_{Ri}e_{Ri}-k_{\Omega i}e_{\Omega i},
\end{equation}
where the control parameters $k_{Ri}$ and $k_{\Omega i}$ are positive constants for the inner-loop error dynamics, for all $i\in\mathcal{I}_N$.
Substituting the control component $u_{2i}$ into \eqref{Eq:u_2i_designed}, the resulting torque controller $M_i$ in \eqref{Eq:u_2i} can ensure the exponential stability of the inner-loop error dynamics as shown in \cite{lee2010geometric}.

\begin{remark}
    The selection of the inner-loop control parameters $k_{Ri}$ and $k_{\Omega i}$ will impact the outer-loop tracking performance, since the inner-loop tracking errors will be propagated to the outer-loop error dynamics as seen in \eqref{Eq:rigidbody_error_dynamics_original}.
    However, due to different time scales for the inner- and outer-loop error dynamics, the inner-loop control parameters $k_{Ri}$ and $k_{\Omega i}$ cannot be designed simultaneously with the outer-loop control parameters. To find the optimal control parameters for the inner-loop, optimal control parameters may be found using metaheuristic optimization techniques as proposed in \cite{boualem2023evaluation}.
    The selection of inner-loop control parameters according to the outer-loop control parameters is out of the scope of this paper, but may be found in \cite{bertrand2011hierarchical}.
\end{remark}

\begin{figure}[!t]
    \centering
    \includegraphics[width=0.95\linewidth]{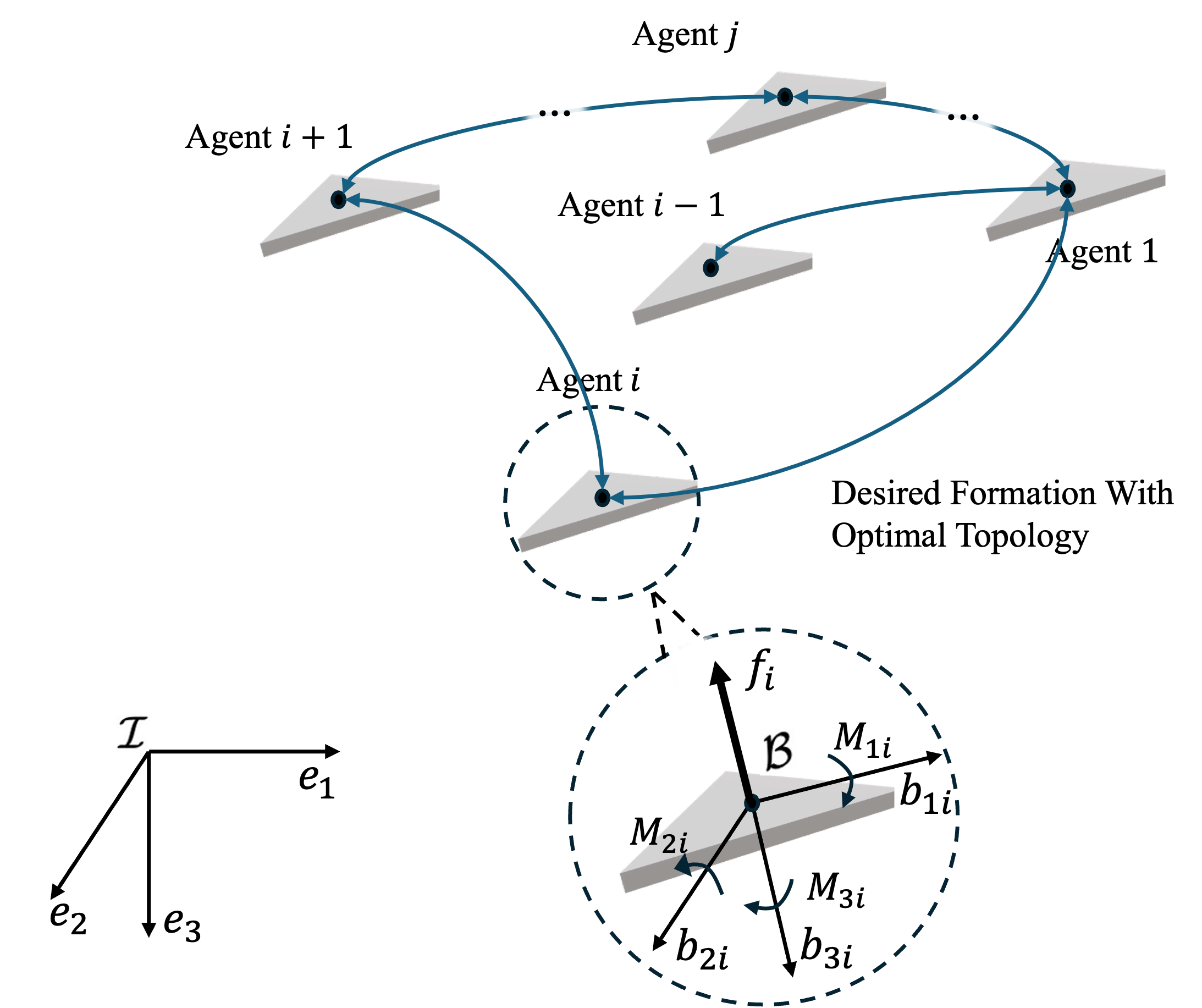}
    \caption{Configuration of the rigid body networks. Each agent is assumed to know the leader's information. }
    \label{fig:config_rigidbody_networks}
\end{figure}

\begin{figure}[!t]
    \centering
    \includegraphics[width=0.95\linewidth]{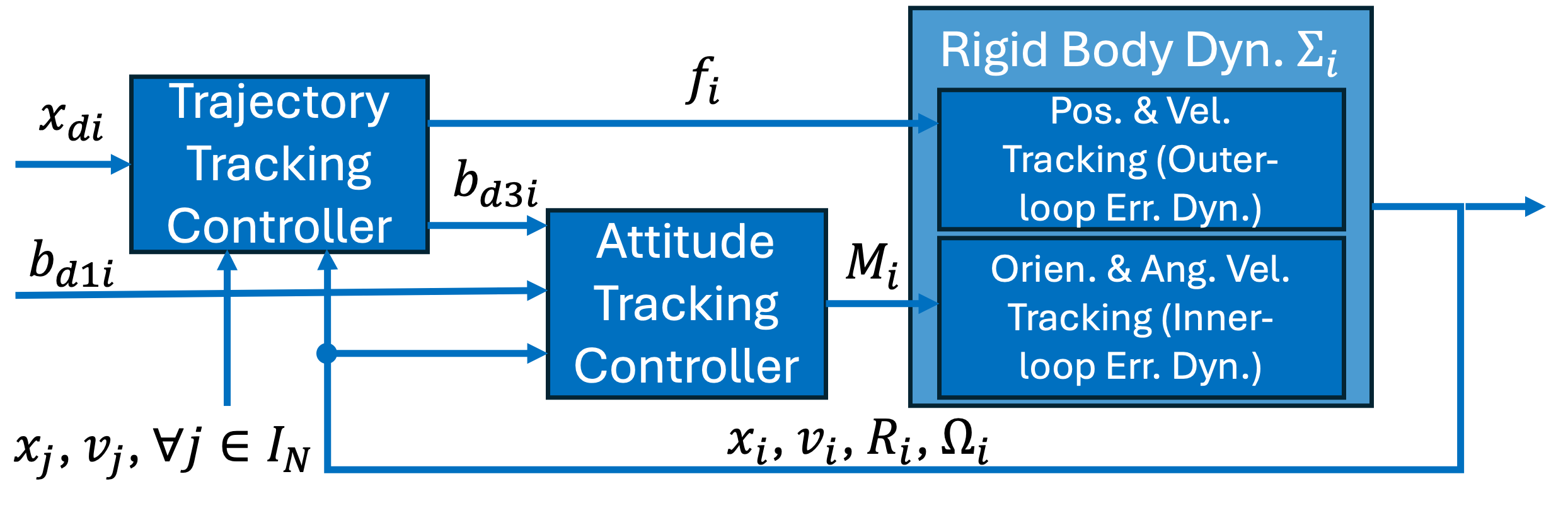}
    \caption{The error dynamics and control architecture for the $i\tsup{th}$ agent in the rigid body network. }
    \label{fig:rigidBodyNetwork_Control_Architecture}
\end{figure}

When the stability of the inner-loop error dynamics is guaranteed, we can focus on the control and topology co-design for the outer-loop error dynamics, i.e.,
\begin{align}\label{Eq:rigidbody_error_dynamics_outer-loop}
    \tilde{\Sigma}_{oi}:\begin{bmatrix}
        \dot{e}_{xi}   \\
        \dot{e}_{vi}   
    \end{bmatrix}=&\underbrace{\begin{bmatrix}
        \0 & \I   \\
        \0 & \0 
    \end{bmatrix}}_{A}\underbrace{\begin{bmatrix}
        e_{xi}   \\
        e_{vi}   
    \end{bmatrix}}_{\bar{e}_i}+\underbrace{\begin{bmatrix}
        \0  \\
        \I 
    \end{bmatrix}}_{B}u_{1i}+\underbrace{\begin{bmatrix}
        \0  \\
        -\frac{1}{m_i}X_i+d'_{vi} 
    \end{bmatrix}}_{w_i},     \nonumber  \\
    =&A\bar{e}_i+Bu_{1i}+w_i,
\end{align}
where the impact of the inner-loop error dynamics is reflected by the nonlinear coupling term $X_i$.

To achieve the tracking control goals while maintaining the sMS (see Def. \ref{def:sMS}) for the entire network, we design the control input component $u_{1i}$ ($i\in\mathcal{I}_N$) for the outer-loop error dynamics in \eqref{Eq:rigidbody_error_dynamics_original} as:
\begin{align}\label{Eq:FeedbackControlInput}
    u_{1i}&=(\bar{L}_{ii}+L_{ii}) \bar{e}_i(t) + \sum_{j\in\mathcal{I}_N\backslash\{i\}} L_{ij} (\bar{e}_i(t)-\bar{e}_j(t)), \nonumber \\
    &=\bar{L}_{ii}\bar{e}_i+\sum_{j\in\mathcal{I}_N}\bar{K}_{ij}\bar{e}_j,
\end{align}
where 
$\bar{L}_{ii}:=\bmtx{l_{ii}^{x} & l_{ii}^{v}}\in\mathbb{R}^{3\times 6}$ is used to regulate the local passivity properties; $\bar{K}_{ij}:=\bmtx{k_{ij}^{x} & k_{ij}^{v}}\in\mathbb{R}^{3\times 6}$ ($j\in\mathcal{I}_N$) are the distributed controller gains (topology) for the network.
It is required that $\bar{K}_{ij} := -L_{ij}, \forall j\neq i$ and $\bar{K}_{ii} := L_{ii} + \sum_{j\in\mathcal{I}_N\backslash\{i\}}L_{ij}$.

Due to different time scales for inner- and outer-loop error dynamics, we have to ensure that the outer-loop dynamics are not dramatically affected during the convergence of the inner-loop error dynamics.
Therefore, using the outer-loop control component \eqref{Eq:FeedbackControlInput}, we introduce the following lemma to ensure the boundedness of the nonlinear coupling term $X_i$ so that we can reasonably view this term as time-varying bounded disturbances in the outer-loop control and topology co-design process.
\begin{lemma}\label{Lem:Growth Restriction Condition}
    For the outer-loop error dynamics in \eqref{Eq:rigidbody_error_dynamics_outer-loop}, there exist a constant $c_{\Delta}\in\mathbb{R}_+$ and a function $\psi(\cdot)\in\mathcal{K}$, which is differentiable at $[e_{R}^\T,\ e_{\Omega }^\T]^\T=\0$ such that the coupling term $X$ presents the following condition:
    \vspace{-1mm}
    \begin{equation}\label{Eq:growth_restriction_condition}
        |X|\leq\psi(e_R)|\bar{e}|,\ \ \mbox{for}\ |\bar{e}|\geq c_{\Delta}, 
    \end{equation}
    where $\bar{e}:=[\bar{e}_i^\T]_{i\in\mathcal{I}_N}^\T$, $e_R:=[e_{Ri}^\T]_{i\in\mathcal{I}_N}^\T$ and $e_{\Omega}:=[e_{\Omega i}^\T]_{i\in\mathcal{I}_N}^\T$.
\end{lemma}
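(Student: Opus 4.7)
The plan is to split the bound on $|X|$ into a geometric, orientation-dependent factor and a magnitude factor $|f_{di}|$, and then exploit the restriction $|\bar{e}|\geq c_{\Delta}$ to absorb the constant thrust/gravity compensation terms into the linear-in-$\bar{e}$ part.

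First I would simplify \eqref{Eq:Xi} using $R_i = R_{di}R_{ei}$ and the orthogonality of $R_{di}$ to obtain $|X_i| = |f_{di}|\,|\alpha_i R_{ei}e_3 - e_3|$ with $\alpha_i := e_3^{\T}R_{ei}e_3$. Writing $R_{ei}e_3 = (a,b,c)^{\T}$ as a unit vector so that $\alpha_i = c$, a direct calculation gives $|\alpha_i R_{ei}e_3 - e_3|^2 = c^2(a^2+b^2) + (c^2-1)^2 = 1-\alpha_i^2$. Parameterizing $R_{ei}$ via Rodrigues' formula $R_{ei} = \I + \sin\theta_i\,\hat{n}_i + (1-\cos\theta_i)\hat{n}_i^2$ for a unit axis $n_i$ and angle $\theta_i$, one gets $1-\alpha_i = (1-\cos\theta_i)(1-(n_i^{\T}e_3)^2)$, hence $1-\alpha_i^2\leq 4\sin^2(\theta_i/2)$. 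Since $|e_{Ri}| = |\sin\theta_i|$, for $|\theta_i|\leq \pi/2$ this yields $\sqrt{1-\alpha_i^2}\leq \sqrt{2}\,|e_{Ri}|$, and this linear bound can be upgraded to $\sqrt{1-\alpha_i^2}\leq \psi_1(|e_{Ri}|)$ for some $\psi_1\in\mathcal{K}$ differentiable at the origin on a neighborhood of $R_{ei} = \I$ kept invariant by the inner-loop controller \eqref{Eq:u_2i_designed}.

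For the magnitude factor, \eqref{Eq:u_1i} rearranges to $f_{di} = m_i(u_{1i} + \dot{v}_{di} - ge_3)$, and the linear feedback \eqref{Eq:FeedbackControlInput} yields $|u_{1i}|\leq C_1|\bar{e}|$ for a constant $C_1>0$ depending only on $\bar{L}_{ii}$ and $\bar{K}_{ij}$. Assuming a uniformly bounded desired acceleration $|\dot{v}_{di}|\leq V$, this gives $|f_{di}|\leq m_i C_1|\bar{e}| + m_i(V+g)$. Setting $c_{\Delta} := (V+g)/C_1$, for every $|\bar{e}|\geq c_{\Delta}$ the constant is absorbed into the linear part, yielding $|f_{di}|\leq 2m_iC_1|\bar{e}|$. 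Combining with the geometric bound gives $|X_i|\leq 2m_iC_1\,\psi_1(|e_{Ri}|)\,|\bar{e}|$ per agent, and stacking via $|X|^2 = \sum_i|X_i|^2$ together with $|e_{Ri}|\leq |e_R|$ produces a class-$\mathcal{K}$ function $\psi$ of $|e_R|$ such that $|X|\leq \psi(e_R)\,|\bar{e}|$ whenever $|\bar{e}|\geq c_{\Delta}$.

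The main obstacle will be making the first (geometric) step rigorous as a class-$\mathcal{K}$ estimate: $\sqrt{1-\alpha_i^2}$ does not vanish on all of $\{e_{Ri}=0\}$ (e.g., $\pi$-rotations about axes not parallel to $e_3$), so the bound $\psi_1(|e_{Ri}|)$ is genuinely local around $R_{ei}=\I$. I would close this gap by invoking the exponential inner-loop convergence guaranteed by \eqref{Eq:u_2i_designed} (shown in \cite{lee2010geometric}), which keeps $R_{ei}$ inside the neighborhood of identity after an initial transient, so the local estimate is precisely the one required for the subsequent outer-loop co-design.
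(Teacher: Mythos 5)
Your proof is correct and follows essentially the same route as the paper's: factor $|X_i|$ into the nominal-thrust magnitude $|f_{di}|$ times an orientation-dependent factor controlled by $|e_{Ri}|$, bound $|f_{di}|$ affinely in $|\bar e|$ via the feedback law, and use the threshold $|\bar e|\ge c_\Delta$ to absorb the constant offset into the linear-in-$|\bar e|$ term before stacking over agents. The one substantive difference is that you derive the geometric estimate $|(e_3^\T R_{ei}e_3)R_ie_3 - R_{di}e_3| = \sqrt{1-\alpha_i^2}\lesssim |e_{Ri}|$ explicitly and correctly flag that it only holds near $R_{ei}=\I$ (it fails at $\pi$-rotations, where $e_{Ri}=\0$ but the factor is nonzero), whereas the paper simply asserts $|X_i|\le |f_{di}|\,|e_{Ri}|$ without noting this restriction, so your version is the more careful one.
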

\begin{proof}
    If we stack up the nominal thrust forces $f_{di}$ with the designed components in \eqref{Eq:u_1i} and \eqref{Eq:FeedbackControlInput} for all $i\in\mathcal{I}_N$, i.e., $F_d:=[f_{di}^\T]_{i\in\mathcal{I}_N}^\T$, then we have:
    \begin{equation*}\label{Eq:norm_desired_Fd}
        \begin{split}
        |F_{d}|&=|M(\bar{L}\bar{e}+\bar{K}\bar{e}+\dot{v}_{d}-\vec{1}ge_3)|    \nonumber \\
        &\leq B+\sqrt{2}\max_i\{m_i\}\max\{|\lambda_{\bar{L}}|,|\lambda_{\bar{K}}|\}|\bar{e}|     \nonumber \\
        &= \frac{1}{2}k_f\left(c_f+|\bar{e}|\right),
        \end{split}
    \end{equation*}
where $M:=\mbox{diag}([m_i]_{i\in\mathcal{I}_N})$, $\bar{L}:=\mbox{diag}([\bar{L}_{ii}]_{i\in\mathcal{I}_N})$, $\bar{K}:=[\bar{K}_{ij}]_{i,j\in\mathcal{I}_N}$, $\vec{1}:=[\I,\I,...,\I]^\T\in\mathbb{R}^{3N\times 3}$, and we assume $|M\dot{v}_{d}-M\vec{1}ge_3|\leq B$ with some $B\in\mathbb{R}_+$ as \cite{lee2010geometric}. 
$\lambda_{\bar{L}}$ and $\lambda_{\bar{K}}$ are the maximum eigenvalues of the matrices $\bar{L}$ and $\bar{K}$, respectively.
$k_f=2\sqrt{2}\max_i\{m_i\}\max\{|\lambda_{\bar{L}}|,|\lambda_{\bar{K}}|\}$, $c_f=\frac{B}{\sqrt{2}\max_i\{m_i\}\max\{|\lambda_L|,|\lambda_{K}|\}}$. 
Thus, we have:
\begin{equation}\label{Eq:norm_two_Fd}
    |F_d|\leq\left\{\begin{array}{ll}
    k_f|\bar{e}|, & |\bar{e}|\geq c_f \\
    k_fc_f, & |\bar{e}|< c_f
    \end{array}\right..
\end{equation}


Hence, for the stacked coupling term $X:=[X_i^\T]_{i\in\mathcal{I}_N}^\T$, we have:
\begin{align*}
    & |X|=|[X_i^\T]_{i\in\mathcal{I}_N}^\T|=|[|X_i|]_{i\in\mathcal{I}_N}^\T|\leq |[|f_{di}||e_{Ri}|]_{i\in\mathcal{I}_N}^\T|   \nonumber  \\
    &\leq |\mbox{diag}([|e_{Ri}|]_{i\in\mathcal{I}_N}^\T)|_F |F_d| \ (\mbox{use matrices' Frobenius norm})\nonumber    \\
    &=|[|e_{Ri}|]_{i\in\mathcal{I}_N}^\T| |F_d|=|e_R||F_d|.
\end{align*}

Thus, based on \eqref{Eq:norm_two_Fd}, we can conclude that
\begin{equation}
    |X|\leq k_f|e_R||\bar{e}|,\ \mbox{for}\ |\bar{e}|\geq c_f,
\end{equation}
where the class-$\mathcal{K}$ function and the positive constant are $\psi(e_R)=k_f|e_R|$ and  $c_{\Delta}=c_f$ in \eqref{Eq:growth_restriction_condition}, respectively.
\end{proof}

With Lem. \ref{Lem:Growth Restriction Condition}, we can reasonably view the term $X_i$ as time-varying bounded disturbances in outer-loop error dynamics during the convergence of the inner-loop error dynamics as in \eqref{Eq:rigidbody_error_dynamics_outer-loop}.
Hence, substituting the controller \eqref{Eq:FeedbackControlInput} into the outer-loop error dynamics \eqref{Eq:rigidbody_error_dynamics_outer-loop}, we can rewrite the outer-loop tracking error dynamics of the rigid body network as:
\begin{equation}\label{Eq:closedLoopErrorDynamics}
    \tilde{\Sigma}_{oi}:\ \ \dot{\bar{e}}_i = (A + B\bar{L}_{ii})\bar{e}_i + \eta_i,
\end{equation}
where $\eta_i:=Bu_{1i}+w_i\equiv\sum_{j\in\mathcal{I}_N} K_{ij} \bar{e}_j+w_i$ with $K_{ij}:=\scriptsize\bmtx{\0 & \0 \\
k_{ij}^{x} & k_{ij}^{v}}$, for all $i\in\mathcal{I}_N$.

Thus, by defining $M_{\eta\bar{e}}:= [K_{ij}]_{i,j\in\mathcal{I}_N}$, $M_{\eta w}:= \I$, $M_{z\bar{e}}:= \I$, and $M_{zw}:= \0$, the closed-loop rigid body network \eqref{Eq:closedLoopErrorDynamics} takes the form of a networked system shown in Fig. \ref{Fig:FormationNetworkForm}.

From \eqref{Eq:closedLoopErrorDynamics}, synthesizing $M_{\eta e}$ will reveal both the desired distributed controllers and communication topology for the rigid body network.
In this paper, our objective is to propose a dissipativity-based control and topology co-design method for rigid body network, facilitating merging and splitting while ensuring the $l_2$-stability and the sMS for the entire network.

\section{Main Results}\label{sec:main_results}

In this section, we provide our main theoretical results. 
First, assuming local dissipativity properties of the rigid bodies as IF-OFP($\nu_i,\rho_i$), for all $i\in\mathcal{I}_N$, we formulate the centralized control and topology co-design problem as an LMI problem. Next, to ensure the assumed local dissipativity properties, we present a local control synthesis process. Eventually, we propose our decentralized co-design process, which enables seamless merging/splitting of rigid body networks.

\subsection{Centralized Control and Topology Co-design}

Based on Prop. \ref{Pr:NSC2Synthesis}, the centralized control and topology co-design problem can be formulated as an LMI problem as summarized in the following theorem. Notably, due to the systematic modelling approach applied to the considered rigid body network, this theorem parallels and generalizes \cite[Th. 1]{Zihao2024CDC}, which addresses control and Topology co-design in longitudinal vehicular platoons with string stability guarantees.


\begin{theorem}\label{Th:CentralizedTopologyDesign}
The closed-loop networked system \eqref{Eq:closedLoopErrorDynamics} under the control input $\eta_i$ can be made finite-gain $L_2$-stable with some $L_2$-gain $\gamma$ (where $\tilde{\gamma} := \gamma^2 < \bar{\gamma}$) from disturbance input $w$ to performance output $z$ and sMS, by synthesizing the interconnection matrix block $M_{\eta e}=[K_{ij}]_{i,j\in\mathcal{I}_N}$ (as in Fig. \ref{Fig:FormationNetworkForm}) via solving the centralized LMI problem:
\begin{subequations}\label{Eq:Th:CentralizedTopologyDesign}
\begin{equation}
\begin{aligned}
\min_{Q,\gamma,\{p_i: i\in\mathcal{I}_N\}}& \sum_{i,j\in\mathcal{I}_N} c_{ij} \Vert Q_{ij} \Vert_1 + c_0 \tilde{\gamma}, \\
\mbox{s.t. }\ \ \ & \hspace{-2mm} p_i > 0,\ \forall i\in\mathcal{I}_N, \  0 < \tilde{\gamma} < \bar{\gamma},
\end{aligned}
\end{equation}
\begin{gather}
    \scriptsize \bmtx{
\textbf{X}_p^{11} & \0 & Q & \textbf{X}_p^{11} \\
\0 & \I & \I & \0\\ 
Q^\T & \I & -Q^\T \textbf{X}^{12}-\textbf{X}^{21}Q-\textbf{X}_p^{22} & -\textbf{X}^{21}\textbf{X}_p^{11} \\
\textbf{X}_p^{11} & \0 & -\textbf{X}_p^{11} \textbf{X}^{12}&  \tilde{\gamma} \I
} \normalsize >0,   \label{Eq:Th:CentralizedTopologyDesignMain}    \\
\mathcal{S}(R_iQ_{ii})\leq p_i\nu_i\epsilon_i\I \ (R_i > 0),\ \ \forall i\in\mathcal{I}_N,   \label{Eq:Th:sMS_Con_K_ii}   \\
\sum_{j\in\mathcal{I}_N\backslash\{i\}}\|R_iQ_{ij}\|< 
    -p_i\nu_i\delta_i,\ \ \forall i\in\mathcal{I}_N, \label{Eq:Th:sMS_Con_K_ij}
\end{gather}
\end{subequations}
where $c_0>0$ is a pre-specified constant, $\delta_i:=\sqrt{\mu_i\lambda_{\min}(R_i)}$ with $\mu_i:=\frac{(\rho_i+\epsilon_i-1)}{\lambda_{\max}(R_i)}$, and $0<\delta_i<1$, $Q:=[Q_{ij}]_{i,j\in\mathcal{I}_N}$ shares the same structure as $M_{\eta e}$, $\textbf{X}^{12}:= \diag([-\frac{1}{2\nu_i}\I]_{i\in\mathcal{I}_N})$, $\textbf{X}^{21}:= (\textbf{X}^{12})^\T$,
$\textbf{X}_p^{11}:= \diag([-p_i\nu_i\I]_{i\in\mathcal{I}_N})$, 
$\textbf{X}_p^{22} := \diag([-p_i\rho_i\I]_{i\in\mathcal{I}_N})$, and 
$M_{\eta e} := (\textbf{X}_p^{11})^{-1} Q$. 
\end{theorem}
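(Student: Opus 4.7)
The plan is to prove the theorem in two steps: finite-gain $L_2$-stability (constraint \eqref{Eq:Th:CentralizedTopologyDesignMain}) and sMS (constraints \eqref{Eq:Th:sMS_Con_K_ii}--\eqref{Eq:Th:sMS_Con_K_ij}), exploiting the fact that the closed-loop error dynamics \eqref{Eq:closedLoopErrorDynamics} fit the generic networked form of Fig.~\ref{Fig:FormationNetworkForm}.

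For $L_2$-stability, I would apply Prop.~\ref{Pr:NSC2Synthesis} directly. The IF-OFP($\nu_i,\rho_i$) assumption supplies the per-subsystem coefficient matrices $X_i$ (via Def.~\ref{Def:X-EID}), which generate the block quantities $\textbf{X}^{12}=\diag([-\tfrac{1}{2\nu_i}\I])$, $\textbf{X}_p^{11}=\diag([-p_i\nu_i\I])$, and $\textbf{X}_p^{22}=\diag([-p_i\rho_i\I])$ appearing in \eqref{Eq:Th:CentralizedTopologyDesign}. Making the specializations $L_{uy}=Q$, $L_{uw}=\textbf{X}_p^{11}$, $M_{zy}=\I$, $M_{zw}=\0$, together with the change of variable $\tilde{\gamma}=\gamma^2$, the generic LMI \eqref{Eq:Pr:NSC2Synthesis} collapses exactly to \eqref{Eq:Th:CentralizedTopologyDesignMain}. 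The recovered $M_{\eta\bar{e}}=(\textbf{X}_p^{11})^{-1}Q$ then yields the distributed interconnection gains $K_{ij}$ with $L_2$-gain $\gamma=\sqrt{\tilde{\gamma}}$.

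For sMS, I would invoke Prop.~\ref{Pr:sufficient_condition_DSS} by showing each closed-loop subsystem is ISS with a linear coupling-to-state gain $\sigma_{xi}(s)=\bar{\sigma}_{xi}s$ of slope $\bar{\sigma}_{xi}<1$. The closed-loop subsystem reads $\dot{\bar{e}}_i=(A+B\bar{L}_{ii}+K_{ii})\bar{e}_i+\sum_{j\neq i}K_{ij}\bar{e}_j+w_i$ with $K_{ij}=-\tfrac{1}{p_i\nu_i}Q_{ij}$. Taking the candidate Lyapunov function $V_i=\bar{e}_i^\T R_i\bar{e}_i$ with $R_i>0$, differentiating along the closed-loop trajectories, combining the IF-OFP($\nu_i,\rho_i$) storage-function inequality with Young's inequality on the neighbor cross-terms, I would obtain a dissipation bound of the form $\dot{V}_i\leq -\alpha_i|\bar{e}_i|^2 + \sum_{j\neq i}\|R_iK_{ij}\|\,|\bar{e}_i||\bar{e}_j|+\text{(disturbance term)}$, where $\alpha_i$ is controlled by $\mathcal{S}(R_iK_{ii})$. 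The translation $Q_{ii}=-p_i\nu_iK_{ii}$ turns \eqref{Eq:Th:sMS_Con_K_ii} into $\mathcal{S}(R_iK_{ii})\leq-\epsilon_i\I$, giving a positive local decay; the analogous translation of \eqref{Eq:Th:sMS_Con_K_ij} produces $\sum_{j\neq i}\|R_iK_{ij}\|<\delta_i$, capping the coupling strength. Standard comparison arguments then yield $\bar{\sigma}_{xi}<1$, and Prop.~\ref{Pr:sufficient_condition_DSS} delivers sMS of \eqref{Eq:closedLoopErrorDynamics}.

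The main obstacle will be pinpointing the constants $\mu_i=(\rho_i+\epsilon_i-1)/\lambda_{\max}(R_i)$ and $\delta_i=\sqrt{\mu_i\lambda_{\min}(R_i)}$ so that \eqref{Eq:Pr:weak_coupling_condition} holds \emph{exactly} when \eqref{Eq:Th:sMS_Con_K_ij} does. This requires carefully tracking (i) the Lyapunov normalization factors $\lambda_{\min}(R_i)$ and $\lambda_{\max}(R_i)$ that convert between $|\bar{e}_i|$ and $V_i^{1/2}$, (ii) the dissipation $\rho_i|\bar{e}_i|^2$ inherited from the IF-OFP property, which adds to $\epsilon_i$ from \eqref{Eq:Th:sMS_Con_K_ii} to form the combined local decay rate, and (iii) the unit mass absorbed by the Young's-inequality cross-term bound, which accounts for the ``$-1$'' in $\rho_i+\epsilon_i-1$. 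Once this algebraic identification is made, combining the $L_2$-stability and sMS arguments completes the proof.
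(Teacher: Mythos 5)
Your proposal is correct and follows essentially the same route as the paper: constraint \eqref{Eq:Th:CentralizedTopologyDesignMain} is obtained by specializing Prop.~\ref{Pr:NSC2Synthesis} exactly as you describe, and the sMS part uses the storage function $V_i=\bar{e}_i^\T R_i\bar{e}_i$, absorbs the $K_{ii}$ term via \eqref{Eq:Th:sMS_Con_K_ii} to get the decay rate $\rho_i+\epsilon_i$, picks up the ``$-1$'' by completing the square on the cross-terms, and closes with the weak-coupling condition of Prop.~\ref{Pr:sufficient_condition_DSS} using \eqref{Eq:Th:sMS_Con_K_ij}. Your identification of $\mu_i$, $\delta_i$, and the translations $\mathcal{S}(R_iK_{ii})\leq-\epsilon_i\I$ and $\sum_{j\neq i}\|R_iK_{ij}\|<\delta_i$ all match the paper's argument.
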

\begin{proof}
The proof of \eqref{Eq:Th:CentralizedTopologyDesignMain} follows directly from the interconnection topology synthesis in Prop. \ref{Pr:NSC2Synthesis} using the fact that each subsystem's dissipativity properties are IF-OFP($\nu_i$,$\rho_i$) and the networked system is desired to be $l_2$-stable with the gain $\gamma$.

For the proof of sMS, consider the network error dynamics \eqref{Eq:closedLoopErrorDynamics}, where we denote $\bar{A}_{ii}:= A + B \bar{L}_{ii}$, and the local controllers $\bar{L}_{ii}$ are assumed given together with a feasible $R_i>0$ so that $\mathcal{S}(\bar{A}_{ii}^\T R_i)\leq  -\rho_i \I$ holds.
Select a storage function $V_i:= \bar{e}_i^\T R_i \bar{e}_i$, and take the directional derivative along \eqref{Eq:closedLoopErrorDynamics} with the controller $\eta_i$ and the condition \eqref{Eq:Th:sMS_Con_K_ii}: 
\begin{align}
\dot{V}_i =&\ \bar{e}_i^\T (R_i\bar{A}_{ii} + \bar{A}^\T_{ii} R_i) \bar{e}_i+2\bar{e}_i^\T R_i \eta_i,  \label{Eq:Th:Vidot_Xi_intermediate}  \\
\leq&-\rho_i\bar{e}_i^\T \bar{e}_i+2\bar{e}_i^\T R_i\Big(\sum_{j\in\mathcal{I}_N}K_{ij}\bar{e}_{j} + w_i\Big) \nonumber \\  
=&\ -(\rho_i+\epsilon_i)\bar{e}_i^\T \bar{e}_i+            2\bar{e}_i^\T\Big(\sum_{j\in\mathcal{I}_N\backslash\{i\}}R_iK_{ij}\bar{e}_j+R_iw_i\Big). \nonumber
\end{align}

Then, based on the Cauchy–Schwarz inequality and completing the squares, we can respectively obtain
\begin{align}
    \dot{V}_i\leq&-(\rho_i+\epsilon_i)|\bar{e}_i|^2+2|\bar{e}_i|        \nonumber  \\
    &\Big|\sum_{j\in\mathcal{I}_N\backslash\{i\}}\|R_iK_{ij}\|\max_{j\in\mathcal{I}_N\backslash\{i\}}|\bar{e}_j|+\|R_i\||w_i|\Big|               \nonumber   \\
    \leq&-(\rho_i+\epsilon_i-1)|\bar{e}_i|^2+        \nonumber  \\
    &\Big|\sum_{j\in\mathcal{I}_N\backslash\{i\}}\|R_iK_{ij}\|\max_{j\in\mathcal{I}_N\backslash\{i\}}|\bar{e}_j|+\|R_i\||w_i|\Big|^2              \nonumber   \\
    \leq&-\mu_i V_i+  W_i,           \nonumber
\end{align}
where we use $\lambda_{\min}(R_i)|\bar{e}_i|^2\leq V_i\leq\lambda_{\max}(R_i)|\bar{e}_i|^2$, and we denote $\mu_i:=\frac{(\rho_i+\epsilon_i-1)}{\lambda_{\max}(R_i)}$, $W_i:= \big|\|R_i\||w_i|+\sum_{j\in\mathcal{I}_N\backslash\{i\}}\|R_iK_{ij}\|\max_{j\in\mathcal{I}_N\backslash\{i\}}|\bar{e}_j|\big|^2,$
with $\rho_i+\epsilon_i-1 > 0$.
This leads to 
\vspace{-1mm}
\begin{equation*}
\lambda_{\min}(R_i)|\bar{e}_i|^2\leq V_i\leq\frac{W_i}{\mu_i}+\Big(V_i(0)-\frac{W_i}{\mu_i}\Big)e^{-\mu_i t},
\vspace{-1mm}
\end{equation*}
which further implies that 
\begin{align}
|\bar{e}_i|\leq&\sqrt{\frac{1}{\lambda_{\min}(R_i)}\Big(\frac{W_i}{\mu_i}+\Big(V_i(0)-\frac{W_i}{\mu_i}\Big)e^{-\mu_i t}\Big)}             \nonumber  \\
\leq&\sqrt{\frac{1}{\lambda_{\min}(R_i)}\Big(\frac{W_i}{\mu_i}+\lambda_{\max}(R_i)|\bar{e}_i(0)|^2e^{-\mu_i t}\Big)}  \nonumber      \\
\leq&\sqrt{\frac{1}{\mu_i\lambda_{\min}(R_i)}}\Big(\sum_{j\in\mathcal{I}_N\backslash\{i\}}\|R_iK_{ij}\|\max_{j\in\mathcal{I}_N\backslash\{i\}}\|\bar{e}_j\|_\infty+     \nonumber   \\ 
&\|R_i\|\|w_i\|_{\infty}\Big)+\sqrt{\frac{\lambda_{\max}(R_i)}{\lambda_{\min}(R_i)}}|\bar{e}_i(0)|e^{-\frac{\mu_i}{2} t}, 
\label{Eq:Th:ISS_subsystem_i}
\end{align}
where we have respectively used the properties $(1-e^{-a}) < 1, \forall a>0$,  $\sqrt{(a^2+b^2)}\leq(a+b), \forall a,b>0$ and $|a| < \|a\|_{\infty}$. 

It is readily seen that \eqref{Eq:Th:ISS_subsystem_i} implies the ISS of the error dynamics $\tilde{\Sigma}_{oi}, i\in\mathcal{I}_N$ \eqref{Eq:closedLoopErrorDynamics}. Thus, based on the sufficient condition in Rmk. \ref{Rm:conditions_for_sMS}, the condition \eqref{Eq:Th:sMS_Con_K_ij} is required to guarantee the sMS of the network.
\end{proof}

\begin{remark}
    Here, we provide the direct relationship between the synthesized interconnection matrix block $[K_{ij}]_{i,j\in\mathcal{I}_N}$ in Thm. \ref{Th:CentralizedTopologyDesign} and the individual agent (global) controller gains required in $\eta_i$ of the error dynamics \eqref{Eq:closedLoopErrorDynamics}.
    In particular, the off-diagonal elements of $[K_{ij}]_{i,j\in\mathcal{I}_N}$ are $K_{ij} = \scriptsize\bmtx{\0 & \0 \\
    k_{ij}^{x} & k_{ij}^{v}}$, for all $i\in\mathcal{I}_N, j\in\mathcal{I}_N\backslash\{i\}$, while the diagonal elements are
\begin{equation}\label{Eq:ControllerGainsDiagonal}
    K_{ii} = K_{i0} - \sum_{j\in\mathcal{I}_N\backslash\{i\}} K_{ij},
\end{equation}
for all $i\in\mathcal{I}_N$, where each $ K_{i0} = \scriptsize\bmtx{\0 & \0 \\
k_{ii}^{x} & k_{ii}^{v}}$.
\end{remark}

\subsection{Local Control Synthesis}

Note that in \eqref{Eq:Th:CentralizedTopologyDesignMain}, local dissipativity properties $(\nu_i,\rho_i)$ and $L_2$-gain $\gamma_i$ are required to initiate this program. Therefore, we present a local control synthesis in the following theorem. 
Compared to our previous work \cite{welikala2025decentralized}, the main difference lies in removing the manual selection of the $p_i$ ($i\in\mathcal{I}_N$) values in the local control synthesis optimization as shown next.


\begin{theorem}\label{Th:LocalControllerDesign}
At each agent $\Sigma_i$, to ensure the IF-OFP($\nu_i,\rho_i$) of the closed-loop networked system $\tilde{\Sigma}_{oi}$ \eqref{Eq:closedLoopErrorDynamics},
the local controller gains $\bar{L}_{ii}$ in \eqref{Eq:FeedbackControlInput} are obtained via the LMI problem:
\begin{subequations}\label{Eq:Th:LocalControllerDesign_tilde}
    \begin{align}
    \mbox{Find: }&\ \tilde{L}_{ii},\ \tilde{P}_i,\ \tilde{\nu}_i,\ \rho_i,\ \tilde{p}_i,\ \tilde{\gamma}_i, \nonumber \\
    \mbox{s.t. }&\ \tilde{P}_i > 0,\ \rho_i>\underline{\rho}_i>0,\ \tilde{\nu}_i<\bar{\tilde{\nu}}_i<0,    \nonumber  \\
    & 
    \bmtx{\I & \tilde{P}_i & \0 \\
    \tilde{P}_i &-\mathcal{S}(A\tilde{P}_i + B \tilde{L}_{ii})& -\rho_i\I + \frac{1}{2}\tilde{P}_i\\
    \0 & -\rho_i\I + \frac{1}{2}\tilde{P}_i & -\tilde{\nu}_i\I} > 0,  \label{Eq:Th:LocalControllerDesign_LMI_condition_tilde}   \\
    &  \bmtx{
-\tilde{\nu}_i & 0 & 0 & -\tilde{\nu}_i \\
0 & \tilde{p}_i & \tilde{p}_i & 0\\ 
0 & \tilde{p}_i & 1 & -\frac{1}{2} \\
-\tilde{\nu}_i & 0 & -\frac{1}{2} &  \tilde{\gamma}_i
} \normalsize >0,   \label{Eq:Th:CentralizedTopologyDesignMain_tilde}
\end{align}
\end{subequations}
for all $i\in\mathcal{I}_N$, where $\nu_i:=\rho_i^{-1}\tilde{\nu}_i$, 
$p_i := (\rho_i\tilde{p}_i)^{-1}$, 
$\tilde{\gamma}:=(\rho_i^2\tilde{p}_i)^{-1}\tilde{\gamma}_i$,  
$P_i:=\rho_i^{-1}\tilde{P}_i$, 
$R_i:=P_i^{-1}$, and 
$\bar{L}_{ii}:=\tilde{L}_{ii}\tilde{P}_i^{-1}$.
\end{theorem}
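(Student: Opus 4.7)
The plan is to derive the two LMIs in Theorem 2 as, respectively, a linearized reformulation of the IF-OFP($\nu_i,\rho_i$) dissipativity of the closed-loop subsystem $\tilde{\Sigma}_{oi}$ in \eqref{Eq:closedLoopErrorDynamics}, and a congruence-transformed restriction of the $i$-th diagonal block of the centralized LMI \eqref{Eq:Th:CentralizedTopologyDesignMain}.

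For the dissipativity LMI \eqref{Eq:Th:LocalControllerDesign_LMI_condition_tilde}, I would begin by selecting the quadratic storage $V_i(\bar{e}_i) = \bar{e}_i^\T R_i \bar{e}_i$ with $R_i > 0$, computing $\dot{V}_i$ along $\dot{\bar{e}}_i = (A+B\bar{L}_{ii})\bar{e}_i + \eta_i$, and enforcing $\dot{V}_i \le s(\eta_i,\bar{e}_i)$ for the IF-OFP supply rate prescribed by Def.~\ref{Def:X-EID}. This yields a $2\times 2$ block matrix inequality whose diagonal blocks involve $-\mathcal{S}(R_i(A+B\bar{L}_{ii})) - \rho_i\I$ and $-\nu_i\I$, bilinear in $(R_i,\bar{L}_{ii})$ and nonlinearly coupled to $\rho_i$. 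I would then apply the congruence transformation $\diag(P_i,\I)$ with $P_i = R_i^{-1}$, introduce $\tilde{L}_{ii} = \bar{L}_{ii}\tilde{P}_i$, and rescale by $\rho_i$ via $\tilde{P}_i = \rho_i P_i$ and $\tilde{\nu}_i = \rho_i\nu_i$; these steps linearize most of the bilinearities but leave a residual quadratic $\tilde{P}_i^2$ in the $(2,2)$ block. A single Schur complement against a new identity $(1,1)$ block absorbs this residual and delivers exactly the $3\times 3$ LMI \eqref{Eq:Th:LocalControllerDesign_LMI_condition_tilde}; matching entry-by-entry the off-diagonal term $-\rho_i\I + \tfrac{1}{2}\tilde{P}_i$ and the $(3,3)$ entry $-\tilde{\nu}_i\I$ is then a direct verification.

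For the compatibility LMI \eqref{Eq:Th:CentralizedTopologyDesignMain_tilde}, I would isolate a single agent inside the centralized LMI \eqref{Eq:Th:CentralizedTopologyDesignMain} by setting $Q_{ij} = \0$ for all $j\neq i$ and $Q_{ii} = \0$, which reduces that LMI to a block-diagonal form whose $i$-th scalar $4\times 4$ block has entries drawn from $\{-p_i\nu_i,\, 1,\, p_i\rho_i,\, -p_i/2,\, \tilde{\gamma}\}$. A congruence with $\diag\!\big(\sqrt{\rho_i/p_i},\, 1/\sqrt{\rho_i p_i},\, 1/\sqrt{\rho_i p_i},\, \sqrt{\rho_i/p_i}\,\big)\I$, combined with the substitutions $\tilde{p}_i = (\rho_i p_i)^{-1}$, $\tilde{\nu}_i = \rho_i\nu_i$, and $\tilde{\gamma}_i = \rho_i^2 \tilde{p}_i\tilde{\gamma}$, then maps this scalar block exactly onto \eqref{Eq:Th:CentralizedTopologyDesignMain_tilde}. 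Crucially, after this transformation the variables $(\rho_i,\tilde{\nu}_i,\tilde{p}_i,\tilde{\gamma}_i)$ enter linearly and simultaneously, so $p_i$ no longer needs to be prescribed a priori --- which is the advantage over \cite{welikala2025decentralized} asserted in contribution (3).

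The principal obstacle is the compounded nonlinearity of the original dissipativity condition in $(R_i,\rho_i,\bar{L}_{ii})$: the product $R_i B\bar{L}_{ii}$, the $\rho_i R_i$ coupling, and the inverse of $R_i$ must be untangled in a prescribed order --- first a $P_i = R_i^{-1}$ congruence, then a $\rho_i$-scaling, then a change of variable for $\bar{L}_{ii}$, and finally a Schur complement --- without reintroducing any bilinear term. I would also verify that the strict sign requirements $\rho_i > \underline{\rho}_i > 0$ and $\tilde{\nu}_i < \bar{\tilde{\nu}}_i < 0$ propagate correctly under the reverse substitutions to yield $R_i > 0$ and $\nu_i < 0$, which are the sign conventions required by the centralized synthesis in Thm.~\ref{Th:CentralizedTopologyDesign}.
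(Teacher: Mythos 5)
Your proposal is correct and follows essentially the same route as the paper: for \eqref{Eq:Th:LocalControllerDesign_LMI_condition_tilde} the paper simply cites the IF-OFP LMI of \cite{welikala2025decentralized} (which is precisely your storage-function/Schur-complement derivation) and multiplies through by $\rho_i$ with the substitutions $\tilde{P}_i=\rho_iP_i$, $\tilde{L}_{ii}=\bar{L}_{ii}\tilde{P}_i$, $\tilde{\nu}_i=\rho_i\nu_i$; for \eqref{Eq:Th:CentralizedTopologyDesignMain_tilde} it extracts the $i$-th diagonal block of the permuted centralized LMI and applies the scaling $\frac{1}{p_i\rho_i}$ followed by the congruence $D_i=\diag(\rho_i,1,1,\rho_i)$, which is exactly your single congruence $\frac{1}{\sqrt{p_i\rho_i}}D_i$. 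The only (cosmetic) divergence is that the paper justifies \eqref{Eq:Th:CentralizedTopologyDesignMain_tilde} as a \emph{necessary} condition for \eqref{Eq:Th:CentralizedTopologyDesignMain} --- positive definiteness of a principal submatrix, using that $Q_{ii}$ has zero diagonal --- whereas you reach the identical $4\times4$ block by restricting to $Q=\0$.
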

\begin{proof}
    We first show \eqref{Eq:Th:LocalControllerDesign_LMI_condition_tilde} by noting that for the closed-loop error dynamics (subsystem) $\tilde{\Sigma}_{oi}$ as in \eqref{Eq:closedLoopErrorDynamics} being IF-OFP($\nu_i$,$\rho_i$), we have (as also seen in Thm. 2 in \cite{welikala2025decentralized}):
    \begin{equation}
        \bmtx{\rho_i^{-1}\I & P_i & \0 \\
    P_i &-\mathcal{S}(AP_i + B \bar{L}_{ii}P_i)& -\I + \frac{1}{2}P_i\\
    \0 & -\I + \frac{1}{2}P_i & -\nu_i\I} > 0, \label{Eq:Th:LocalControllerDesign_LMI_condition}
    \end{equation}
    for all $i\in\mathcal{I}_N$, where $P_i:=R_i^{-1}$.

    Hence, if we multiply $\rho_i$ on both sides of the LMI \eqref{Eq:Th:LocalControllerDesign_LMI_condition}, we have:
    \begin{equation*}
        \bmtx{\I & \rho_i P_i & \0 \\
    \rho_i P_i &-\mathcal{S}(A\rho_i P_i + B \rho_i\bar{L}_{ii}P_i)& -\rho_i\I + \frac{1}{2}\rho_i P_i\\
    \0 & -\rho_i\I + \frac{1}{2}\rho_i P_i & -\rho_i\nu_i\I} > 0, 
    \end{equation*}
    which is right the same as the condition \eqref{Eq:Th:LocalControllerDesign_LMI_condition_tilde}.
 
    Then, we show the condition \eqref{Eq:Th:CentralizedTopologyDesignMain_tilde}.
    Denote the matrix in the LMI \eqref{Eq:Th:CentralizedTopologyDesignMain} as $\Phi$ (i.e., $\Phi>0$), and we equivalently have:
    \begin{equation}
        \Phi>0\ \Leftrightarrow\ \bar{\Phi}:=[[\Phi_{k,l}^{ij}]_{k,l\in\mathcal{I}_4}]_{i,j\in\mathcal{I}_N}>0,
    \end{equation}
    where $\bar{\Phi}$ is the ``block element-wise" permutation of $\Phi$ (as seen in Lem. 6 in \cite{welikala2024decentralized}).
    
    Using the diagonal blocks of $\bar{\Phi}$, we can identify a set of necessary conditions for the main LMI condition \eqref{Eq:Th:CentralizedTopologyDesignMain} in Thm. \ref{Th:CentralizedTopologyDesign} (i.e., $\Phi>0$) to hold as
    \begin{equation*}
        \Phi>0\ \Leftrightarrow\ \bar{\Phi}>0\ \Rightarrow\ \{\bar{\Phi}_{ii}>0,\ \forall i\in\mathcal{I}_N\},
    \end{equation*}
    where each $\bar{\Phi}_{ii}:=[\Phi_{kl}^{ii}]_{k,l\in\mathcal{I}_4}$ takes the form as
    \begin{equation}\label{Eq:Phi_ii_bar}
        \bar{\Phi}_{ii}:=\bmtx{
        -p_i\nu_i & 0 & 0 & -p_i\nu_i \\
        0 & 1 & 1 & 0\\
        0 & 1 & p_i\rho_i & -\frac{1}{2}p_i \\
        -p_i\nu_i & 0 & -\frac{1}{2}p_i &  \tilde{\gamma}
        } \normalsize >0,
    \end{equation}
    since the $Q_{ij}$ ($i,j\in\mathcal{I}_N$) blocks in the $Q$ matrix of \eqref{Eq:Th:CentralizedTopologyDesignMain} have the same form as $K_{ij}$, but with $Q_{ij}:=\scriptsize\bmtx{\0 & \0 \\
    q_{ij}^{x} & q_{ij}^{v}}$ and $Q_{ij}=-p_i\nu_i K_{ij}$, and each $Q_{ii}$ block has zeros on its diagonal.

    Note that for \eqref{Eq:Phi_ii_bar}, we have the following equivalence:
    \begin{align}
        \bar{\Phi}_{ii}>0\ & \Leftrightarrow\ \tilde{\Phi}_{ii}:=\frac{1}{p_i\rho_i}\bar{\Phi}_{ii}>0, \nonumber  \\
       \Leftrightarrow\ D_i^\T\tilde{\Phi}_{ii}D_i&=\bmtx{-\rho_i\nu_i & 0 & 0 & -\rho_i\nu_i \\
       0 & \rho_i^{-1}p_i^{-1} & \rho_i^{-1}p_i^{-1} & 0 \\
       0 & \rho_i^{-1}p_i^{-1} & 1 & -\frac{1}{2} \\
       -\rho_i\nu_i & 0 & -\frac{1}{2} & \rho_i\tilde{\gamma}},  \nonumber   \\
       &=\bmtx{-\tilde{\nu}_i & 0 & 0 & -\tilde{\nu}_i \\
       0 & \tilde{p}_i & \tilde{p}_i & 0 \\
       0 & \tilde{p}_i & 1 & -\frac{1}{2} \\
       -\tilde{\nu}_i & 0 & -\frac{1}{2} & \tilde{\gamma}_i}>0,    \label{Eq:DT_Phi_ii_bar_D}
    \end{align}
    where $D_i:=\mbox{diag}([\rho_i,\ 1,\ 1,\ \rho_i])$. Note that \eqref{Eq:DT_Phi_ii_bar_D} is exactly the condition \eqref{Eq:Th:CentralizedTopologyDesignMain_tilde}, and thus, this completes the proof.
\end{proof}

\begin{remark}
    The main steps for the implementation of local controller design and centralized global co-design are: 
    
\noindent
\textbf{Step\,1:} Select some scalar parameters: $p_i>0,\forall i\in\mathcal{I}_N$;\\
\noindent
\textbf{Step\,2:} Synthesize local controllers via \eqref{Eq:Th:LocalControllerDesign_tilde};\\
\noindent
\textbf{Step\,3:} If \eqref{Eq:Th:LocalControllerDesign_tilde} is infeasible, return to \textbf{Step\,1};\\
\noindent
\textbf{Step\,4:} Syntesize global co-design using Thm. \ref{Th:CentralizedTopologyDesign} (or Thm. \ref{Th:DecentralizedTopologyDesign} for decentralized co-design).

Different from our previous work \cite{welikala2025decentralized}, for the local control synthesis in \eqref{Eq:Th:LocalControllerDesign_tilde}, we remove the manual selection of the $p_i$ ($i\in\mathcal{I}_N$) values as observed in \eqref{Eq:Th:CentralizedTopologyDesignMain_tilde}.
Note that, a similar four-step process can be applied in a decentralized fashion if Step 4 (i.e., global co-design) can be made decentralized. This is introduced next. 
\end{remark}

\subsection{Decentralized Co-design for Merging and Splitting}

To enable merging and splitting, we require each agent can solve the control and topology co-design \eqref{Eq:Th:CentralizedTopologyDesign} in an equivalently decentralized manner. In this way, when agents merge or split, the controllers (also the topologies) of the remaining agents do not need to be redesigned.
Based on Prop. \ref{Pr:MainProposition}, we can break the LMI in \eqref{Eq:Th:CentralizedTopologyDesignMain} into smaller LMIs so that each agent only need to solve a corresponding one.

\begin{theorem}\label{Th:DecentralizedTopologyDesign}
The closed-loop error dynamics of the network $\tilde{\Sigma}_{oi}$ can be made finite-gain $L_2$-stable with some $L_2$-gain $\gamma$ (where  $\tilde{\gamma} := \gamma^2 < \bar{\gamma}$) from disturbance input $w$ to performance output $z$, if at each agent $\Sigma_i, i\in\mathcal{I}_N$: 
(1) the local controller gains $\bar{L}_{ii}$ are from \eqref{Eq:Th:LocalControllerDesign_LMI_condition_tilde}, (2) the interconnection/global controller gain blocks $\{K^i\}$ are designed using the local LMI problem:
\begin{subequations}\label{Eq:Th:DecentralizedTopologyDesign}
\begin{equation}
\begin{aligned}
 \min_{\{Q^i\}, \hat{\gamma}_i, p_i}& \hspace{-2mm}\sum_{j\in\mathcal{I}_{i-1}}c_{ij}\Vert Q_{ij} \Vert_1 + c_{ji} \Vert Q_{ji} \Vert_1 + c_{0i}\hat{\gamma}_i + c_{i}\vert \hat{\gamma}_i-\tilde{\gamma}_i\vert      \\ 
    \mbox{s.t. }\ \ \ & p_i>0,\ \hat{\gamma}_i < \bar{\gamma},\ \tilde{W}_{ii}>0,\  \eqref{Eq:Th:sMS_Con_K_ii},
\end{aligned}
\end{equation}
\begin{equation}\label{Eq:Th:sMS_Con_K_ij_Decentralized}
    \frac{1}{\delta_i}\|R_iQ_{ij}\|\leq -\frac{p_i\nu_i}{2^j},\ \ \forall j\in\mathcal{I}_{i-1}
\end{equation}
\end{subequations}
where $\tilde{\gamma}_i$ is from \eqref{Eq:Th:CentralizedTopologyDesignMain_tilde} (obtained in Step 2), and $\tilde{W}_{ii}$ is from \eqref{Eq:Pr:MainProposition1} when enforcing $W=[W_{ij}]_{i,j\in \mathcal{I}_N} > 0$ with 
\begin{equation*}
W_{ij} := 
\scriptsize \bmtx{
\mathsf{e}_{ij}V_p^{ii} & \0 & Q_{ij} & \mathsf{e}_{ij}V_p^{ii} \\
\0 & \mathsf{e}_{ij}\I & \mathsf{e}_{ij}\I & \0\\ 
Q_{ji}^\T & \mathsf{e}_{ij}\I & -Q_{ji}^\T S_{jj}-S_{ii}Q_{ij}-\mathsf{e}_{ij}R_p^{ii} & -\mathsf{e}_{ij}S_{ii}V_p^{ii} \\
\mathsf{e}_{ij}V_p^{ii} & \0 & -\mathsf{e}_{ij}V_p^{ii} S_{jj} &  \hat{\gamma}_i\mathsf{e}_{ij} \I
},
\end{equation*}
$V_p^{ii} := -p_i\nu_i \I$,
$R_p^{ii} := -p_i\rho_i \I$,
$S_{ii}:= -\frac{1}{2\nu_i}\I$ and blocks $\{K^{i}\}$ are determined by $K_{ij} = (V_p^{ii})^{-1}Q_{ij}$, and 
(2) the update: 
\begin{equation}\label{Eq:Th:DecentralizedTopologyDesign2}
 K_{j0}^{\mbox{New}} := K_{j0}^{\mbox{Old}} + K_{ji}   
\end{equation}
is requested at each prior and neighboring agent. 
\end{theorem}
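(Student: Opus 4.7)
The plan is to reduce the centralized LMI \eqref{Eq:Th:CentralizedTopologyDesignMain} to per-agent LMIs via Proposition \ref{Pr:MainProposition}, while decoupling the centralized sMS bound \eqref{Eq:Th:sMS_Con_K_ij} across agents in a way that stays valid as the network grows. First I would apply a block-element-wise permutation (analogous to the one used in Theorem \ref{Th:LocalControllerDesign}) to rewrite the centralized LMI in Theorem \ref{Th:CentralizedTopologyDesign} as $W = [W_{ij}]_{i,j\in\mathcal{I}_N} > 0$, with each $4\times 4$ block $W_{ij}$ exactly as displayed in the theorem statement. Because the permutation preserves positive-definiteness, this is equivalent to \eqref{Eq:Th:CentralizedTopologyDesignMain}. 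Proposition \ref{Pr:MainProposition} then gives $W > 0 \Leftrightarrow \tilde{W}_{ii} > 0$ for all $i\in\mathcal{I}_N$. The key structural observation is that $W_{ij}$ for $j > i$ depends on $Q_{ji}^\T$ only, not on $Q_{ij}$, so the Schur-complement $\tilde{W}_{ii}$ involves precisely the block family $\{Q^i\}$ (equivalently $\{K^i\}$) together with $p_i$ and the scalar $\hat{\gamma}_i$. The penalty $|\hat{\gamma}_i-\tilde{\gamma}_i|$ in the objective ties the agent-level gain back to the $\tilde{\gamma}_i$ produced in Step 2 via Theorem \ref{Th:LocalControllerDesign}, while the constraint $\hat{\gamma}_i<\bar{\gamma}$ preserves the global performance target.

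Next, I would verify that the decentralized sMS condition \eqref{Eq:Th:sMS_Con_K_ij_Decentralized} implies the centralized \eqref{Eq:Th:sMS_Con_K_ij}. Summing $\|R_iQ_{ij}\|/\delta_i \leq -p_i\nu_i/2^j$ over $j\in\mathcal{I}_{i-1}$ and using $\sum_{j=1}^{\infty} 2^{-j}=1$ yields $\sum_{j\in\mathcal{I}_{i-1}}\|R_iQ_{ij}\| < -p_i\nu_i\delta_i$, which matches the centralized sum restricted to prior agents. Contributions $\|R_iQ_{ik}\|$ arising from future agents $k>i$ are separately controlled when each such $k$ solves its own decentralized LMI for $Q_{ki}$, so the sum $\sum_{j\neq i}\|R_iQ_{ij}\|$ stays uniformly bounded by $-p_i\nu_i\delta_i$ regardless of $N$. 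This uniform-in-$N$ bound is precisely what enables merging/splitting without redesigning existing agents, and it echoes the sMS sufficient condition in Proposition \ref{Pr:sufficient_condition_DSS} (via Remark \ref{Rm:conditions_for_sMS}).

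The main obstacle, and the reason for the update \eqref{Eq:Th:DecentralizedTopologyDesign2}, is showing that previously verified $\tilde{W}_{jj}>0$ at each prior agent $j<i$ remains valid when a new agent $i$ enters and commits non-zero off-diagonal blocks $K_{ji}$. By \eqref{Eq:ControllerGainsDiagonal}, $K_{jj} = K_{j0} - \sum_{k\neq j} K_{jk}$, so setting $K_{j0}^{\text{New}} := K_{j0}^{\text{Old}} + K_{ji}$ cancels the newly appearing $-K_{ji}$ term in that sum, leaving $K_{jj}$ and hence the local gain $\bar{L}_{jj}$ unchanged. Since $\tilde{W}_{jj}$ depends only on $\{Q^j\}$ and not on any $Q_{kj}$ with $k>j$, all earlier local LMIs remain feasible; the only fresh constraint at agent $j$ is the incremental sMS bound on the pair $(j,i)$, which is exactly what \eqref{Eq:Th:sMS_Con_K_ij_Decentralized} enforces at agent $i$. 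Combining these three threads---permutation plus Sylvester-style decomposition, geometric-weighted sMS bounds, and invariance of $K_{jj}$ under the update rule---reconstructs the centralized $L_2$-stability and sMS guarantees of Theorem \ref{Th:CentralizedTopologyDesign} compositionally.
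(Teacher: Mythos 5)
Your proposal matches the paper's proof in all essentials: the block-element-wise permutation into $W=[W_{ij}]_{i,j\in\mathcal{I}_N}>0$ followed by Proposition~\ref{Pr:MainProposition} to obtain the per-agent conditions $\tilde{W}_{ii}>0$, the geometric-series bound $\sum_{j}2^{-j}<1$ to pass from \eqref{Eq:Th:sMS_Con_K_ij_Decentralized} to \eqref{Eq:Th:sMS_Con_K_ij}, and the update \eqref{Eq:Th:DecentralizedTopologyDesign2} to keep $K_{jj}$ (and hence the feasibility established at prior agents) unchanged. Your added remark that the terms $\|R_iQ_{ik}\|$ with $k>i$ must be handled when later agents solve their own LMIs corresponds to the same informal step the paper takes when it ``indexes each $\tfrac{1}{2}$ with the order of the $i$\tsup{th} agent's neighbors,'' so your argument is no less rigorous than the published one.
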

\begin{proof}
    At each agent/iteration $\Sigma_i,i\in\mathcal{I}_N$, the set of blocks $\{K^i\}$ is derived. 
    By the network matrices-based decentralization technique in Prop. \ref{Pr:MainProposition}, enforcing $\tilde{W}_{ii}>0$ at each $\Sigma_i$ is equivalent to enforcing $W=[W_{ij}]_{i,j\in\mathcal{I}_N} > 0$ for the entire platoon $\Sigma$. 
    Note that, due to the special dependence \eqref{Eq:ControllerGainsDiagonal}, each derived $K_{ji}, j\in\mathcal{I}_{i-1}$, will affect the matrix $K_{jj}$ derived previously at the prior neighboring vehicle $\Sigma_j$ (of vehicle $\Sigma_i$) - violating the requirement that the $K$ matrix should be a network matrix. To ensure the network matrix property of $K$ (and thus, the application of Prop. \ref{Pr:MainProposition}), we need to use the following update:
    \begin{equation}
        K_{jj}^{\text{New}} = \Big(K_{j0} - \sum_{l<i, l \neq j} K_{jl}\Big) - K_{ji} = K_{jj}^{\text{Old}} - K_{ji},
    \end{equation}
    which requires the updates in \eqref{Eq:Th:DecentralizedTopologyDesign2}. This completes the proof of the decentralized LMI as in \eqref{Eq:Th:DecentralizedTopologyDesign}.

    For the proof of sMS, it only suffices to show that our proposed decentralized alternative \eqref{Eq:Th:sMS_Con_K_ij_Decentralized} implies the centralized version \eqref{Eq:Th:sMS_Con_K_ij}. From \eqref{Eq:Th:sMS_Con_K_ij_Decentralized} and the relation $K_{ij} = (V_p^{ii})^{-1}Q_{ij}$, it implies $\frac{1}{\delta_i}\sum_{j\in\mathcal{I}_N\backslash\{i\}}\|R_iK_{ij}\|\leq \sum_{j\in\mathcal{I}_N\backslash\{i\}} \frac{1}{2^j}$.

Basically, we index each ``$\frac{1}{2}$" with the same order of the $i\tsup{th}$ vehicle's neighbors. Since $\sum_{j\in\mathcal{I}_N\backslash\{i\}} \frac{1}{2^j}< \sum_{j\in\mathcal{I}_{\infty}} \frac{1}{2^j}=1$, \eqref{Eq:Th:sMS_Con_K_ij_Decentralized} implies \eqref{Eq:Th:sMS_Con_K_ij}. This completes the proof.
\end{proof}

\section{Simulation Examples}\label{sec:simulation}

\begin{figure}[!b]
\centering
\includegraphics[width=3in]{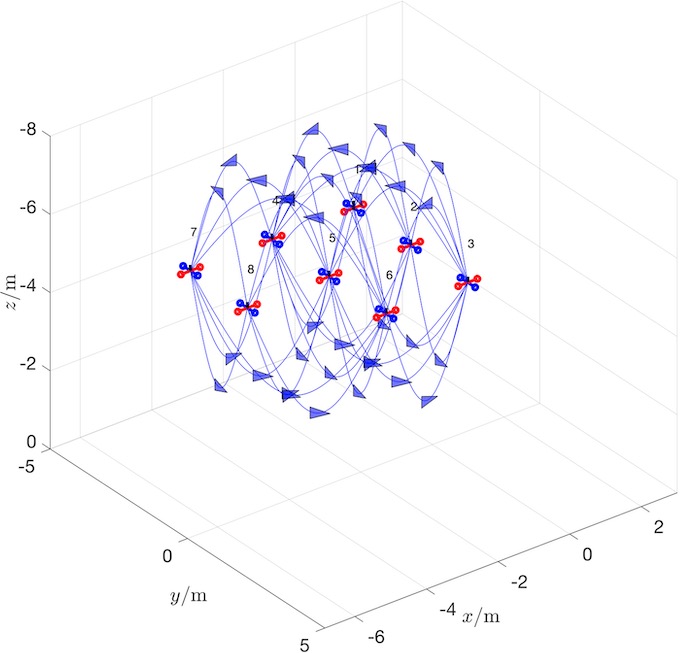}
\caption{Initially assumed communication topology.} 
\label{Fig:initialTopology_quadrotor_formation}
\end{figure}

\begin{figure*}[!t]
    \centering
    \begin{subfigure}{0.32\textwidth}
        \includegraphics[width=\linewidth]{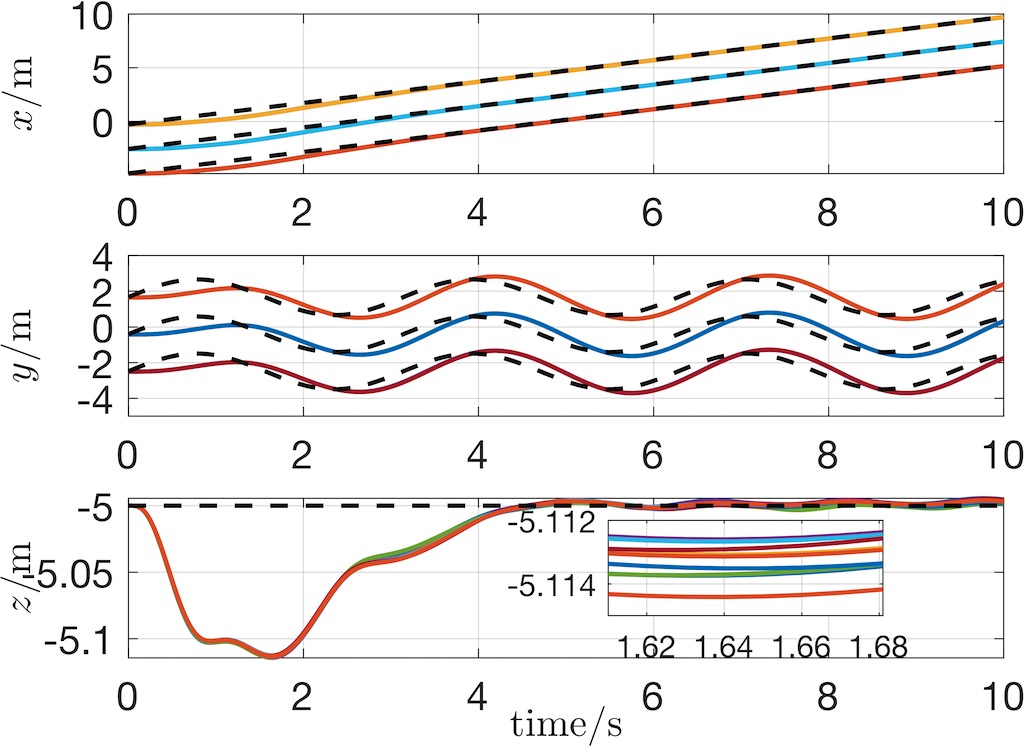}
        \vspace{-7mm}
        \caption{}
        \label{fig:position_tracking_results_quadrotor}
    \end{subfigure}
    \hfill
    \begin{subfigure}{0.32\textwidth}
        \includegraphics[width=\linewidth]{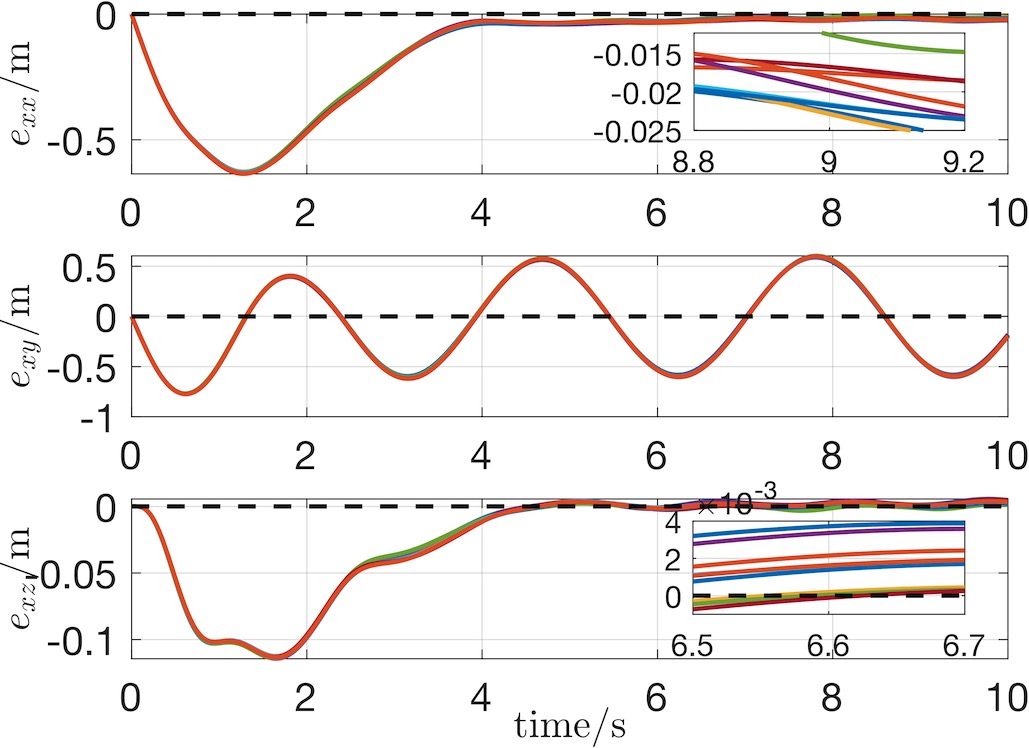}
        \vspace{-7mm}
        \caption{}
        \label{fig:position_tracking_errors_quadrotor}
    \end{subfigure}
    \hfill
    \begin{subfigure}{0.32\textwidth}
        \includegraphics[width=\linewidth]{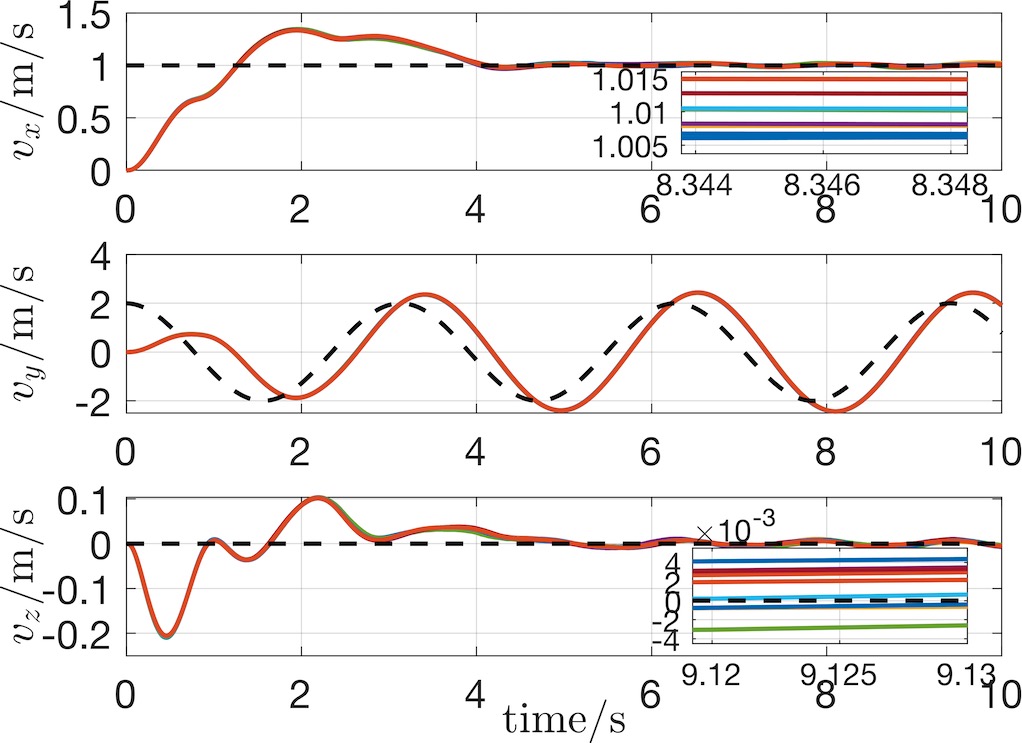}
        \vspace{-7mm}
        \caption{}
        \label{fig:velocity_tracking_results_quadrotor}
    \end{subfigure}
    \hfill
    \newline
    \begin{subfigure}{0.32\textwidth}
        \includegraphics[width=\linewidth]{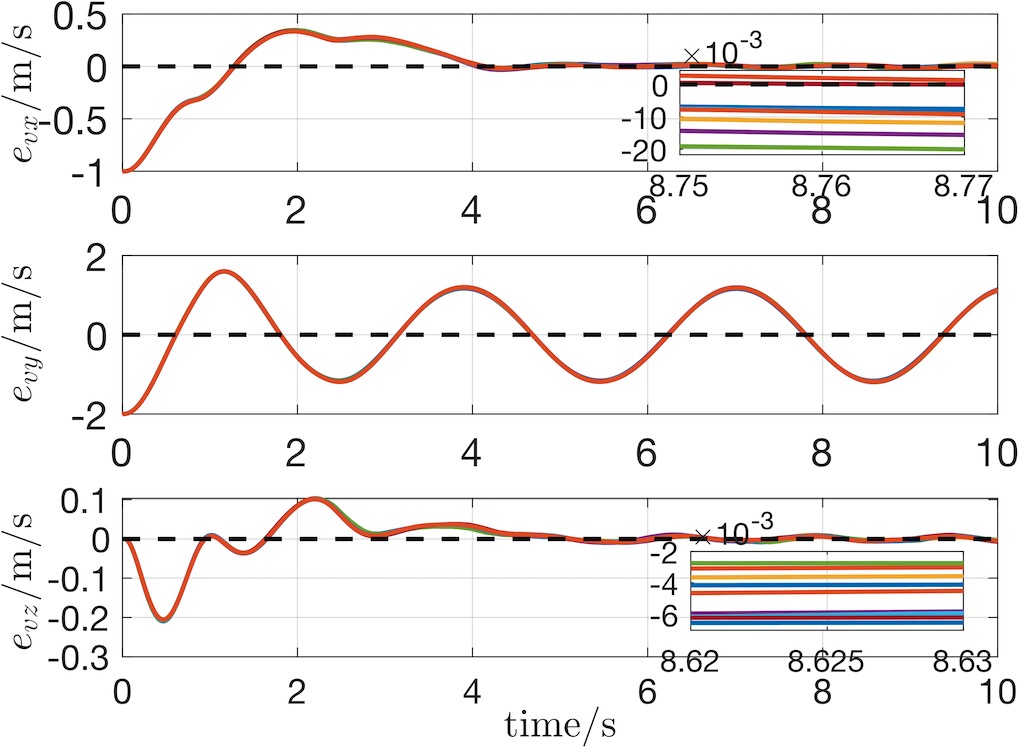}
        \vspace{-7mm}
        \caption{}
        \label{fig:velocity_tracking_errors_quadrotor}
    \end{subfigure}
    \hfill
    \begin{subfigure}{0.32\textwidth}
        \includegraphics[width=\linewidth]{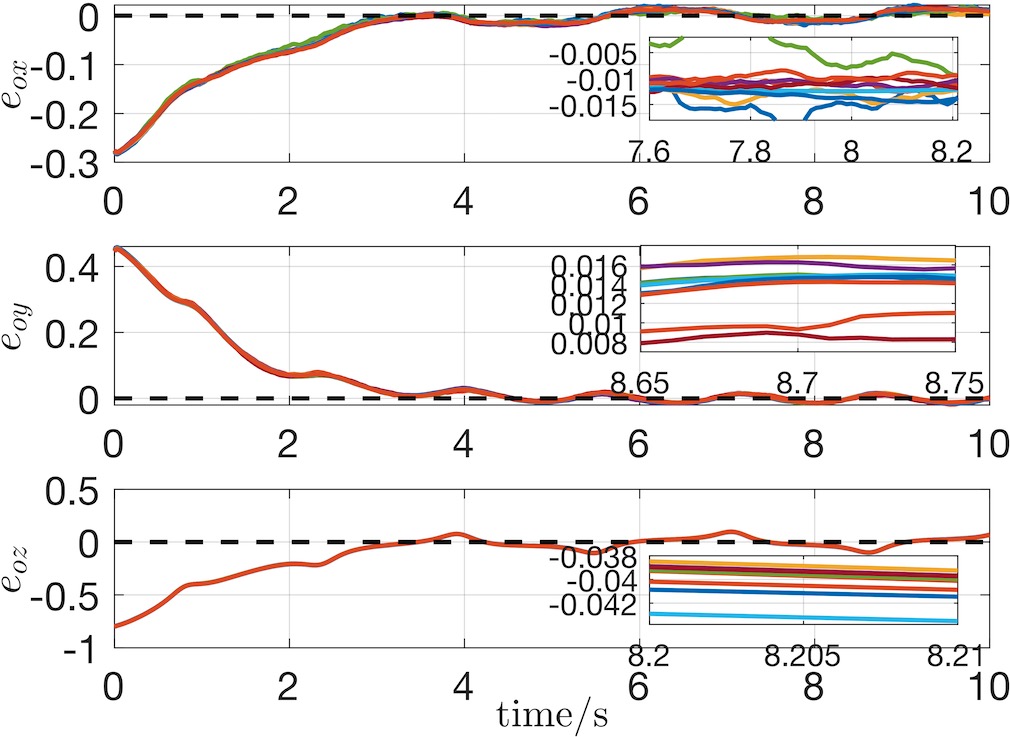}
        \vspace{-7mm}
        \caption{}
        \label{fig:orientation_tracking_errors_quadrotor}
    \end{subfigure}
    \hfill
    \begin{subfigure}{0.32\textwidth}
        \includegraphics[width=\linewidth]{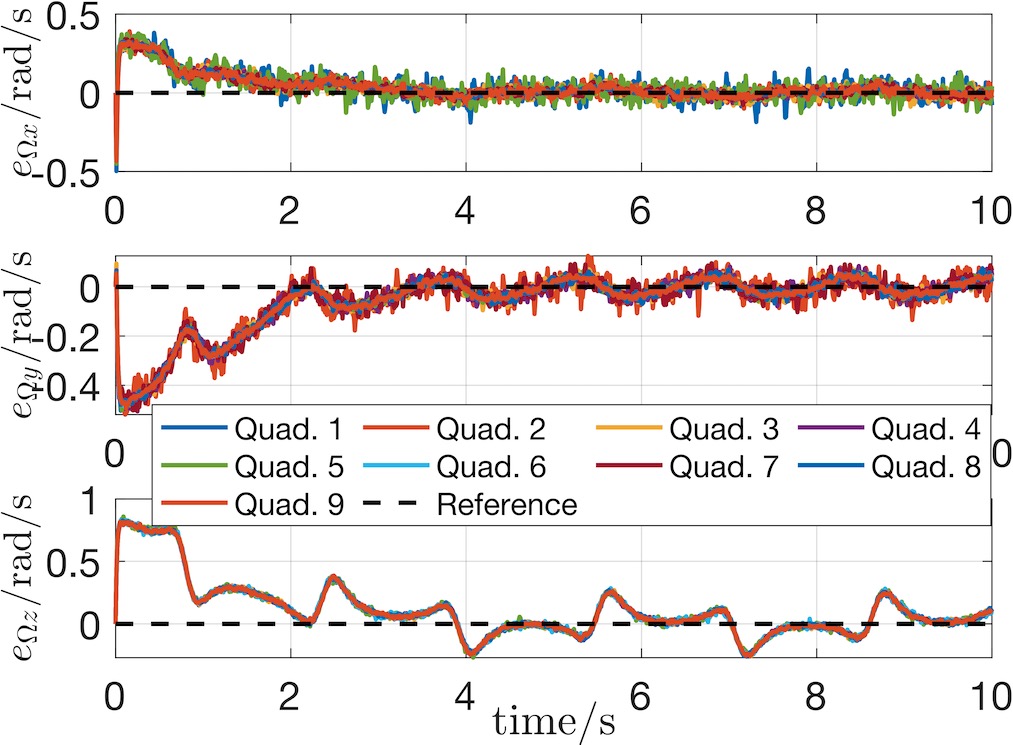}
        \vspace{-7mm}
        \caption{}
        \label{fig:angular_velocity_tracking_errors_quadrotor}
    \end{subfigure}
    \caption{Results observed by our proposed decentralized co-design with $9$ followers: (a) position tracking; (b) position tracking errors; (c) translational velocity tracking; (d) translational velocity tracking errors;
    (e) orientation tracking errors; (f) angular velocity tracking errors.}
    \label{fig:decentralized_merging_sMS_results}
\end{figure*}

\begin{figure*}[!t]
    \centering
    \begin{subfigure}{0.32\textwidth}
        \includegraphics[width=\linewidth]{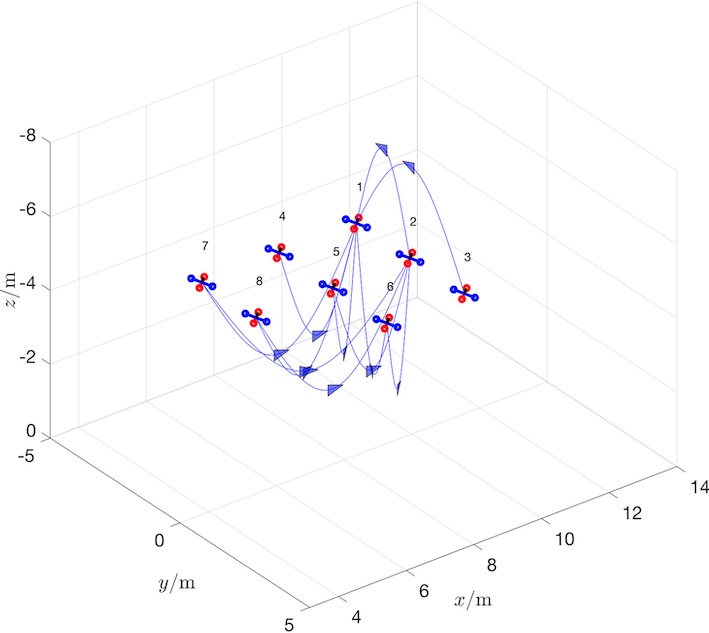}
        \vspace{-7mm}
        \caption{}
        \label{fig:decentralized_topology_sMS_8quadrotors}
    \end{subfigure}
    \hfill
    \begin{subfigure}{0.32\textwidth}
        \includegraphics[width=\linewidth]{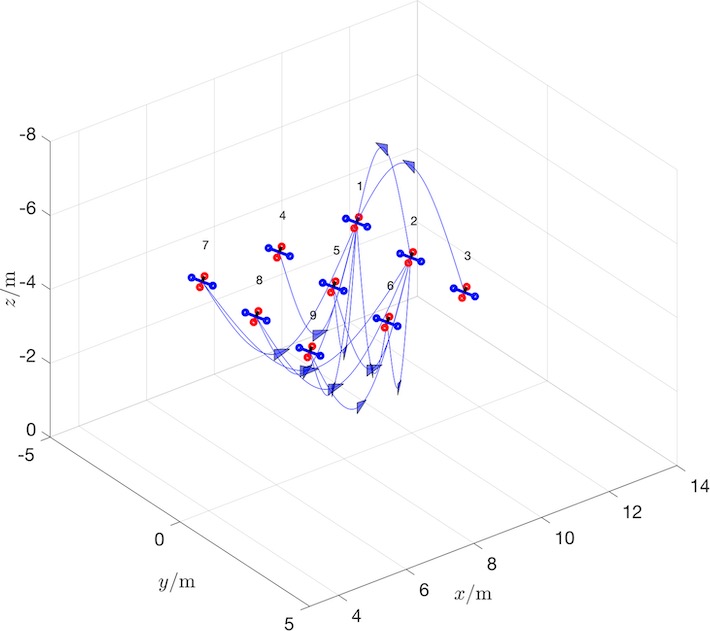}
        \vspace{-7mm}
        \caption{}
        \label{fig:decentralized_topology_sMS_9quadrotors}
    \end{subfigure}
    \hfill
    \begin{subfigure}{0.32\textwidth}
        \includegraphics[width=\linewidth]{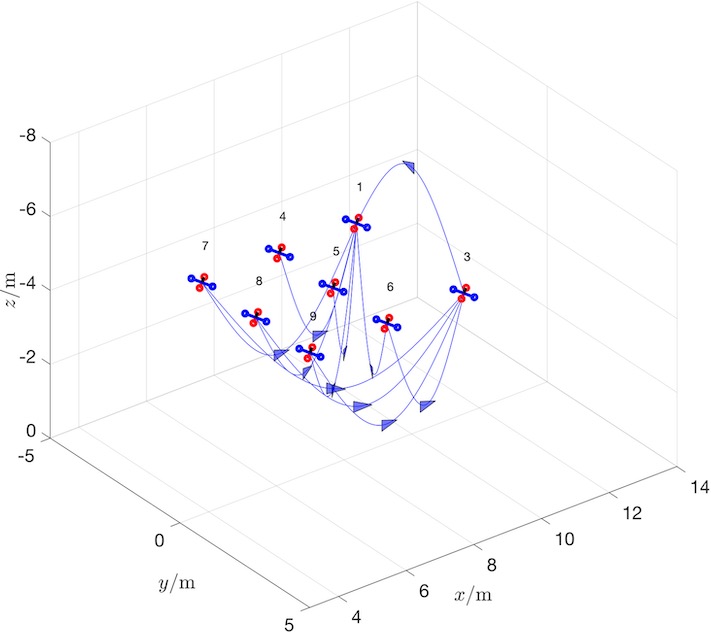}
        \vspace{-7mm}
        \caption{}
        \label{fig:decentralized_topology_sMS_8quadrotors_onemissing}
    \end{subfigure}
    \caption{
      Quadrotor merging and splitting process using our proposed decentralized co-design: (a) topology with $8$ followers; (b) add the $9$th quadrotor at the corner; (c) remove the $2$nd quadrotor from the $9$ quadrotors' formation.}
    \label{Fig:decentralized_sMS_quadrotors_topology}
\end{figure*}

\begin{figure*}[!t]
    \centering
    \begin{subfigure}{0.32\textwidth}
        \includegraphics[width=\linewidth]{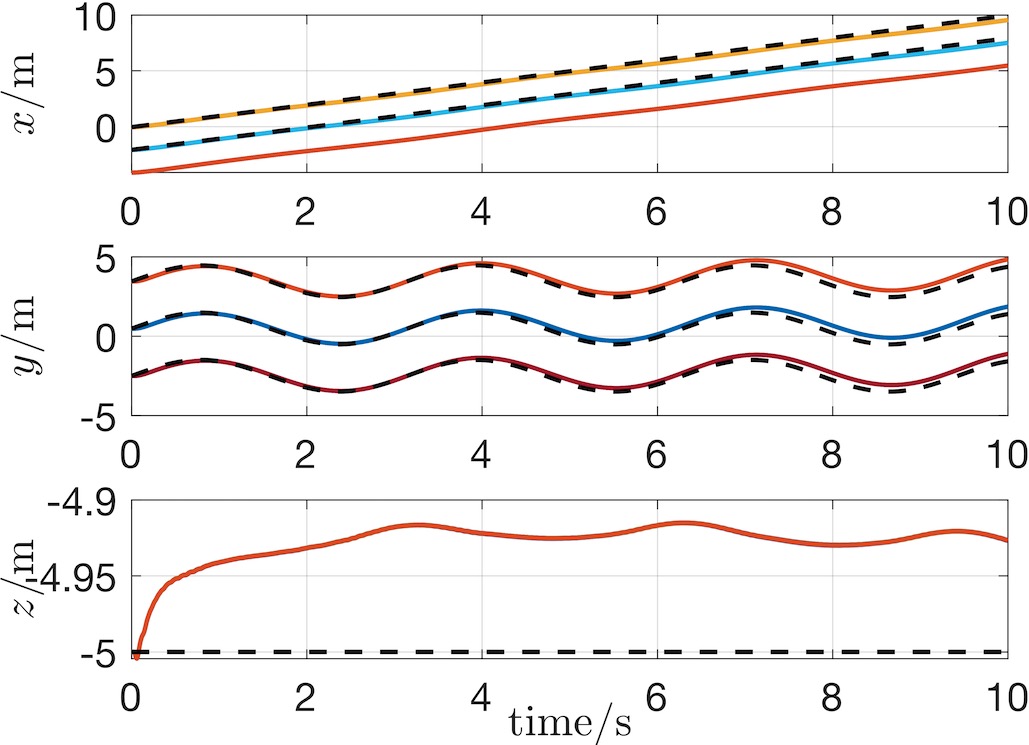}
        \vspace{-7mm}
        \caption{}
        \label{fig:position_tracking_results_quadrotor_consensus}
    \end{subfigure}
    \hfill
    \begin{subfigure}{0.32\textwidth}
        \includegraphics[width=\linewidth]{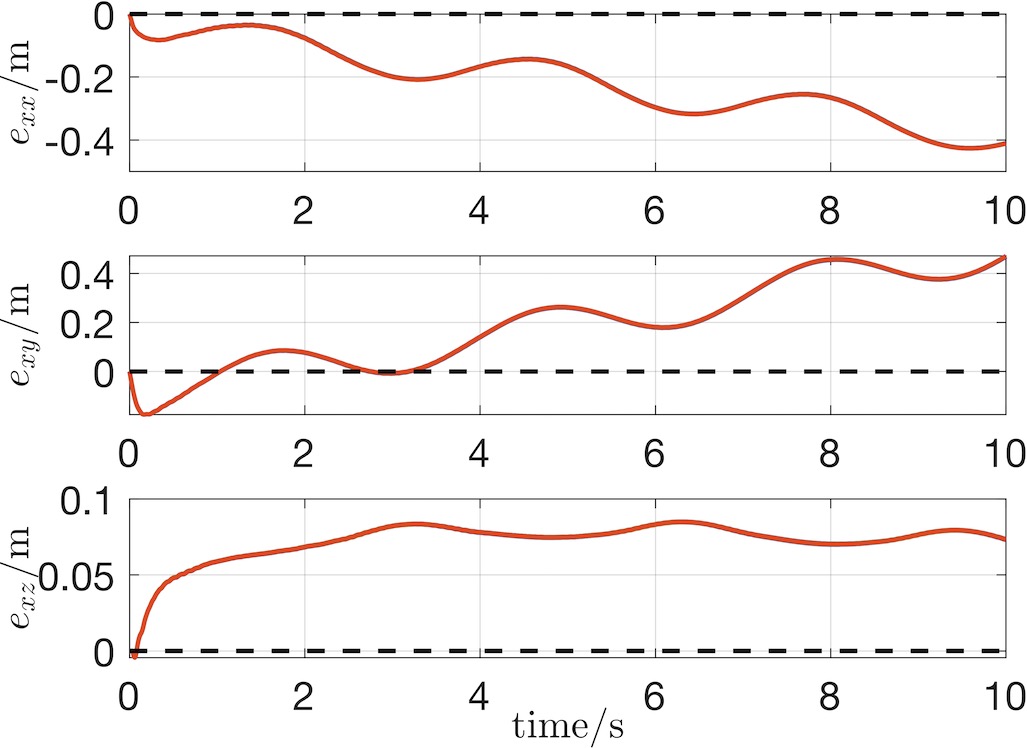}
        \vspace{-7mm}
        \caption{}
        \label{fig:position_tracking_errors_quadrotor_consensus}
    \end{subfigure}
    \hfill
    \begin{subfigure}{0.32\textwidth}
        \includegraphics[width=\linewidth]{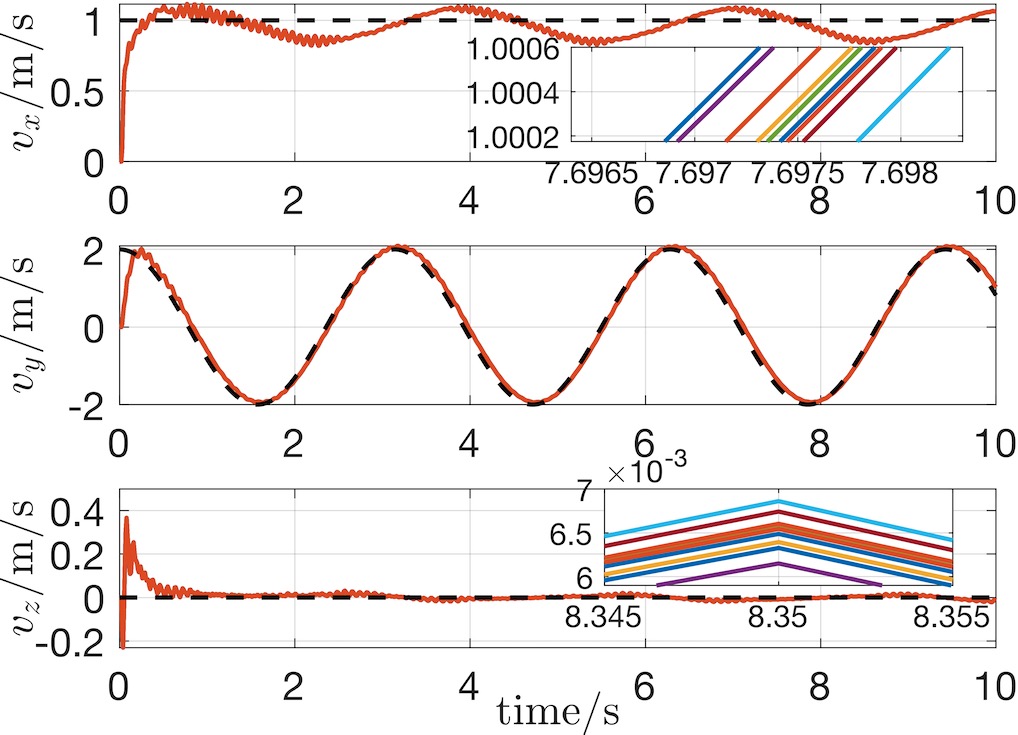}
        \vspace{-7mm}
        \caption{}
        \label{fig:velocity_tracking_results_quadrotor_consensus}
    \end{subfigure}
    \hfill
    \newline
    \begin{subfigure}{0.32\textwidth}
        \includegraphics[width=\linewidth]{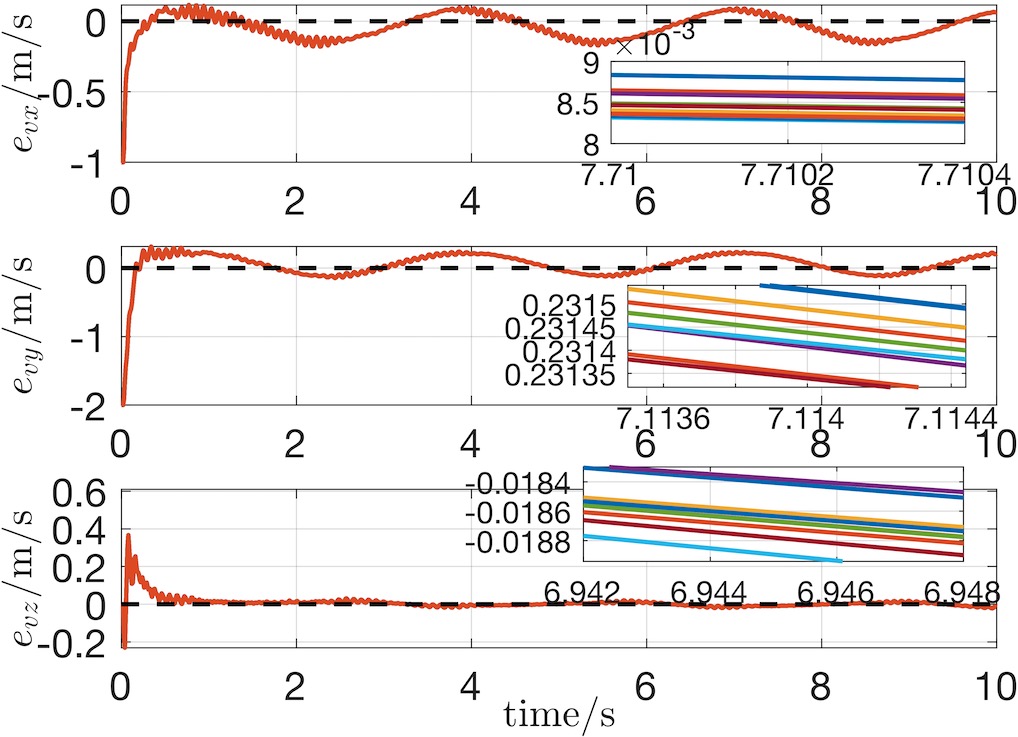}
        \vspace{-7mm}
        \caption{}
        \label{fig:velocity_tracking_errors_quadrotor_consensus}
    \end{subfigure}
    \hfill
    \begin{subfigure}{0.32\textwidth}
        \includegraphics[width=\linewidth]{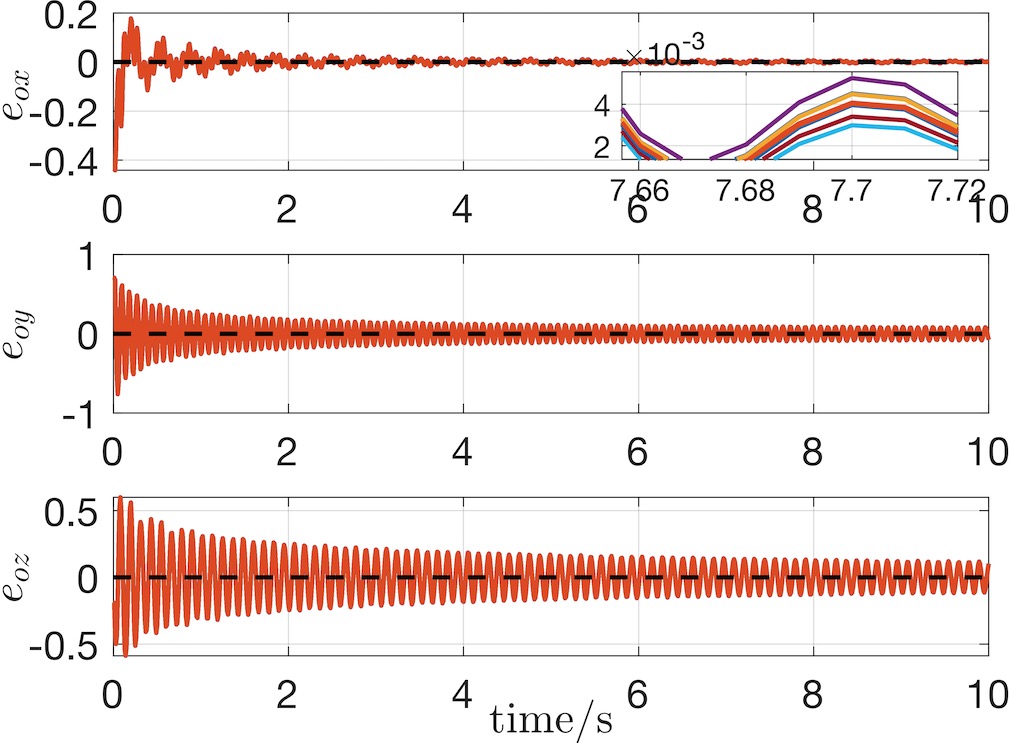}
        \vspace{-7mm}
        \caption{}
        \label{fig:orientation_tracking_errors_quadrotor_consensus}
    \end{subfigure}
    \hfill
    \begin{subfigure}{0.32\textwidth}
        \includegraphics[width=\linewidth]{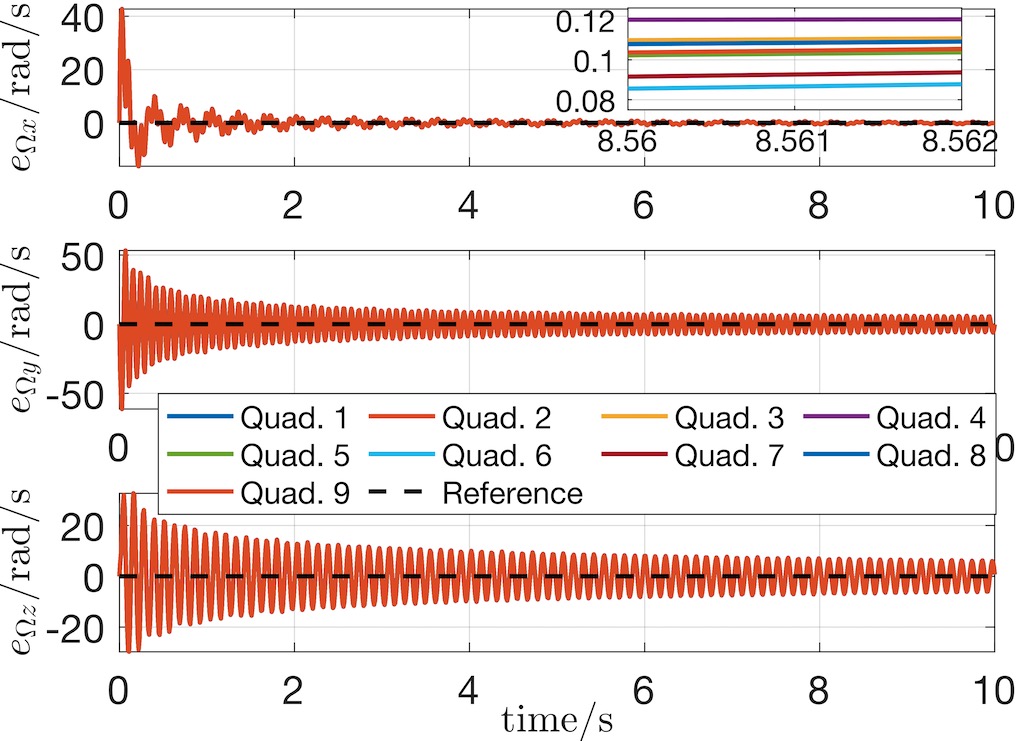}
        \vspace{-7mm}
        \caption{}
        \label{fig:angular_velocity_tracking_errors_quadrotor_consensus}
    \end{subfigure}
    \caption{Results observed by the state-of-the-art consensus-based method in \cite{amirkhani2022consensus} with $9$ followers: (a) position tracking; (b) position tracking errors; (c) translational velocity tracking; (d) translational velocity tracking errors;
    (e) orientation tracking errors; (f) angular velocity tracking errors.
     }
    \label{fig:consensus-based_controller_results}
\end{figure*}

In this section, we verify the effectiveness of our proposed decentralized co-design method in the previous section by considering a quadrotor formation control scenario ($3$ rows and $3$ columns as seen in Fig. \ref{Fig:initialTopology_quadrotor_formation}).
Simulation results are generated by a specifically developed simulator in MATLAB\footnote{Publicly available in \url{https://github.com/NDzsong2/Quadrotor-Network-Simulator.git}}.
Without loss of generality, and for ease of system and control parameters selection, we initially consider a homogeneous quadrotor formation with $8$ quadrotors following their corresponding virtual leaders. Each follower is with the parameters $m_i = 0.55\mbox{kg}$, $J_i:=\mbox{diag}\{2.2,2.9,5.3\}\times 10^{-3}\mbox{kg.m}^2$, and all these parameters are assumed to involve $\pm 10\%$ uncertainties. 
The virtual leaders of their corresponding followers are determined through their distance to a point starting at $x_0(0):=[2, -2.5, -5]^\T$, and each virtual leader's position $x_{0i}$, for all $i\in\mathcal{I}_8$, is determined by two offsets with respect to this point, i.e., row offsets $r_{i}=-(x_{mi}+x_{vi})e_1$ and column offsets $c_{i}=(y_{mi}+y_{vi})e_2$, based on their positions in the formation, where $e_1:=[1, 0, 0]^\T$, $e_2:=[0, 1, 0]^\T$. Here, $x_{mi}=2$, $x_{vi}\sim\mbox{U}(-0.5,0.5)$, and $y_{mi}=2.5$, $y_{vi}\sim\mbox{U}(-0.5,0.5)$  are the mean and variance of the row offsets and column offsets, respectively.
For example, the virtual leader at the $2\mbox{nd}$ row and $2\mbox{nd}$ column (the virtual leader of the $5\tsup{th}$ follower, i.e., $x_{05}(0)$) is selected as $x_{05}(0)=x_0(0)+2r_i+c_i$.
Besides, each follower tracks the same desired velocity and desired acceleration as $v_{0i}(t):=[1, 2\cos(2t), 0]^\T$, $\dot{v}_{0i}(t):=[0, -4\sin(2t), 0]^\T$, for all $t\geq 0$ and $i\in\mathcal{I}_8$, i.e., all the followers track their sinusoidal trajectories and keep certain gaps between each other.
The initial positions of all the followers are assumed to be $x_i(0)=x_{0i}(0)$, for all $i\in\mathcal{I}_{8}$. 
The initial translational velocity, orientation and angular velocity of all the followers are assumed as $v_i(0)=[0, 0, 0]^\T$, $R_i(0)=\I$ and $\Omega_i(0)=[0, 0, 0]^\T$, respectively.
The external disturbances are assumed as random noise, i.e., $d_{vi}$, $d_{\Omega i}\sim\mathcal{N}(0,0.01\I)$.


The initial topology of the quadrotor formation is selected as that in Fig. \ref{Fig:initialTopology_quadrotor_formation}, where each quadrotor can only communicate with its virtual leader (not shown in Fig. \ref{Fig:initialTopology_quadrotor_formation}) and its near neighbors.
To stabilize the inner-loop error dynamics of each quadrotor, i.e., the dynamics of $e_{Ri}$ and $e_{\Omega i }$ in \eqref{Eq:rigidbody_error_dynamics_original}, we select $k_{Ri} = k_{\Omega i} = 50$ in \eqref{Eq:u_2i_designed}, for all $i\in\mathcal{I}_8$. With these control parameters, the performance of the inner-loop dynamics are shown in Fig. \ref{fig:orientation_tracking_errors_quadrotor} and Fig. \ref{fig:angular_velocity_tracking_errors_quadrotor}, and it is shown that the orientation and angular velocity tracking errors converge for all directions with some small random oscillations around $0$ due to the random external disturbances. 

Besides, the inner-loop errors do not amplify over the formation, and the maximum infinite norm of the inner-loop errors is $\|[e_{ai}^\T,e_{\Omega i}^\T]^\T\|_{\infty}=\|e_{\Omega y}\|_{\infty}=0.86$, for all $i\in\mathcal{I}_7$, which ensures not only the boundedness of the inner-loop errors, but also the sMS of the inner-loop error dynamics.
The tracking results of the outer-loop dynamics, i.e., $x_i$, $v_i$, $e_{xi}$ and $e_{vi}$, are illustrated in Figs. \ref{fig:position_tracking_results_quadrotor}-\ref{fig:velocity_tracking_errors_quadrotor}. 
It is readily shown in Fig. \ref{fig:position_tracking_results_quadrotor} and Fig. \ref{fig:velocity_tracking_results_quadrotor} that both the desired position and velocity signals in all directions can be well tracked within $5$s, under our proposed decentralized strategy. Similar to the inner-loop dynamics, the tracking signals involve some random oscillations around $0$ due to the presence of random disturbances and the $L_2$-gain value is $3.21$. 
We also observe a relatively larger oscillation of the position and velocity tracking in $y$-direction as in Fig. \ref{fig:position_tracking_errors_quadrotor} and Fig. \ref{fig:velocity_tracking_errors_quadrotor} with the maximal steady-state deviations of $0.6$ and $1.2$, respectively. This is due to the fact that the quadrotors have to frequently change their heading directions in order to track the sinusoidal trajectory.
Moreover, from Fig. \ref{fig:position_tracking_errors_quadrotor} and Fig. \ref{fig:velocity_tracking_errors_quadrotor}, it is obviously observed that the position and velocity tracking errors in all directions are uniformly bounded even if there exist certain random oscillations over the entire platoon with the maximum infinite norm being $\|[e_{xi}^\T,e_{vi}^\T]^\T\|_{\infty}=\|e_{vx}\|_{\infty}=2$, for all $i\in\mathcal{I}_7$. In other words, the position and velocity tracking errors do not propagate and amplify over the formation, and thus, the sMS of the outer-loop dynamics is satisfied.
Another benefit of our proposed decentralized co-design method is that the $L_2$-gains of the network always stay at the value of $3.21$ during quadrotors merging and splitting maneuvers as in Figs. \ref{fig:decentralized_topology_sMS_8quadrotors}-\ref{fig:decentralized_topology_sMS_8quadrotors_onemissing}.
This shows the robustness of our proposed method, since the merging process does not change the $L_2$-stability behavior dramatically.

The synthesized communication topologies are shown in Fig. \ref{Fig:decentralized_sMS_quadrotors_topology} based on our proposed decentralized co-design approach. 
From the synthesized topologies in Fig. \ref{Fig:decentralized_sMS_quadrotors_topology}, we observe that the links from backwards quadrotors to front ones are more dominant as compared to the reverse ones, especially the dense connection of the $1$st and $2$nd quadrotors, which implies that the information from backwards quadrotors is more important than the front ones, and the knowledge of the leader's information is more effective for quadrotor formation control with mesh stability guarantee.
This observation is practically meaningful, since the front quadrotors will propagate the errors over the network, and thus, more information from the backward quadrotors is required to ensure the non-propagation of the errors (i.e., mesh stability) and safety.

To show the compositionality of our proposed decentralized co-design, we compare the changes of the synthesized control/topology gains in the quadrotors' merging and splitting processes as in Fig. \ref{fig:decentralized_topology_sMS_8quadrotors}-\ref{fig:decentralized_topology_sMS_8quadrotors_onemissing}.
Without loss of generality, we show the changes of the controller/topology gain $K_{13}$ from the $3$rd follower to the $1$st follower during the maneuverings. Based on the simulation results, the gain $K_{13}$ keeps to be $\scriptsize\bmtx{\0 & \0 \\ k_{13}^x & k_{13}^v}$, where 
$k_{13}^x=\scriptsize\bmtx{-0.66 & -0.66 & -0.66 \\ -0.66 & -0.66 & -0.66 \\ -0.66 & -0.66 & -0.66}*10^{-4}$ and 
$k_{13}^v=\scriptsize\bmtx{0.795 & 0.795 & 0.795 \\ 0.795 & 0.795 & 0.795 \\ 0.795 & 0.795 & 0.795}*10^{-4}$, during the merging of the $4$th-$7$th followers. 


To better illustrate the effectiveness of the proposed control method in this paper, 
we compare our proposed co-design methods to a state-of-the-art consensus-based controller \cite{amirkhani2022consensus}. For this controller, we assume that the quadrotors are interconnected with their direct neighbors, i.e., each quadrotor in the formation can only communicate with its nearest neighbors without the knowledge of other quadrotors' information. 
Simulation results of the consensus-based method are shown in Fig. \ref{fig:consensus-based_controller_results} for the quadrotor formation control with $9$ followers.
The major improvement is on the translational velocity and inner-loop tracking performance as seen in Fig. \ref{fig:velocity_tracking_errors_quadrotor_consensus}, Figs. \ref{fig:orientation_tracking_errors_quadrotor_consensus} and \ref{fig:angular_velocity_tracking_errors_quadrotor_consensus}, respectively, where the translational velocity tracking errors, i.e., $e_{vi}$, and the inner-loop tracking errors, i.e., $e_{ai}$ and $e_{\Omega i}$ converge rapidly within $0.5$s and $2.32$s, respectively. Besides, the maximal steady-state deviations are $0.23$ and $7.57$ for $e_{vi}$, and inner-loop tracking errors, respectively.


However, compared to our proposed co-design method, the consensus-based method causes frequent oscillations of the inner-loop tracking errors $e_{ai}$ and $e_{\Omega i}$ with larger magnitudes as seen in Figs. \ref{fig:orientation_tracking_errors_quadrotor_consensus} and \ref{fig:angular_velocity_tracking_errors_quadrotor_consensus}, and even the divergence of the position tracking errors on the $x$- and $y$-directions as seen in Figs. \ref{fig:position_tracking_results_quadrotor_consensus} and \ref{fig:position_tracking_errors_quadrotor_consensus}, which means that the quadrotors will gradually drift away from the desired trajectory.
This is probably because there is more information from the front quadrotors to the backward ones, which causes a more complex error propagation over the formation. Hence, the mesh stability condition is not satisfied. Moreover, the minimum $L_2$-gain is $47.5$, which is obviously larger than our proposed co-design method.
Another point to be noted is that the consensus-based controller is not compositional, since the designed controller needs to be redesigned (i.e., re-tuned) when quadrotors merge or split.
Based on the above observations, the performance of our proposed co-design method is highlighted in terms of scalability, compositionality, and tracking performance.

\section{Conclusion}\label{sec:conclusion}

In this paper, we proposed a dissipativity-based decentralized control and topology co-design framework for rigid body networks.
The proposed method can not only achieve the basic tracking control objectives, but also ensure the scalable mesh stability and the compositionality of the rigid body networks, which enables the merging and splitting control of the agents.
Besides, the synthesized topology under our proposed method indicates that the information from the backward neighbors is more dominant in network control and ensuring the mesh stability. 
Furthermore, the performance of our proposed co-design method is highlighted via a comparison study with respect to a state-of-the-art consensus-based method.
Future work aims to extend the results to the case when not all followers know the leader's information.

\bibliographystyle{IEEEtran}
\bibliography{references}

\end{document}